\documentclass[11pt]{article}

\usepackage[margin=1in]{geometry}
\usepackage[T1]{fontenc}
\usepackage[nopatch=footnote]{microtype}
\usepackage[colorlinks=true]{hyperref}

\usepackage{
  amsmath, amsfonts, amssymb, amsthm,
  booktabs, xcolor, setspace, paralist, makecell, longtable,
  bm, orcidlink, dsfont, graphicx, float, mathrsfs, url,
  tabularx, array, multirow, subcaption
}
\usepackage{natbib}
\usepackage{lmodern}
\usepackage{authblk}

\definecolor{darkblue}{rgb}{0,0,.6}
\hypersetup{
  citecolor=darkblue,
  linkcolor=darkblue,
  urlcolor=darkblue
}

\graphicspath{{plots_new/}}

\newsavebox\CBox
\newcommand{\cH}{\mathcal H}
\newcommand{\cI}{\mathcal I}
\newcommand{\cU}{\mathcal U}
\newcommand\numberthis{\addtocounter{equation}{1}\tag{\theequation}}
\newcommand{\<}{\langle}
\renewcommand{\>}{\rangle}
\newcommand{\norm}[1]{\lVert#1\rVert}
\newcommand{\tr}{\mathrm{tr\hspace{2mm}}}
\newcommand{\Cov}{\mathrm{cov}}

\renewcommand{\hat}{\widehat}
\newcommand{\normm}[1]{\left\lVert#1\right\rVert}
\newcommand{\Em}{\mathbb E}
\newcommand{\cC}{\mathcal C}
\newcommand{\Rm}{\mathbb R}
\newcommand{\tp}{^\top}
\newcommand{\td}{\tilde}
\newcommand{\Var}{\mathrm{var}}
\newcommand{\subf}[2]{%
\begin{subfigure}[t]{#2\textwidth}
\centering
\includegraphics[width=\textwidth]{#1}
\end{subfigure}}

\providecommand{\appendixone}{}

\theoremstyle{plain}
\newtheorem{theorem}{Theorem}[section]
\newtheorem{lemma}[theorem]{Lemma}

\newtheorem{proposition}[theorem]{Proposition}

\theoremstyle{definition}
\newtheorem{assumption}[theorem]{Assumption}
\newtheorem{definition}[theorem]{Definition}
\newtheorem{remark}[theorem]{Remark}
\newtheorem{example}[theorem]{Example}

\makeatletter
\newcommand{\leqnomode}{\tagsleft@true}
\newcommand{\reqnomode}{\tagsleft@false}
\makeatother

\title{Attribution of Spurious Factors from High-Dimensional Functional Time Series}

\author[1]{Adam Nie}
\author[1]{Yanrong Yang}
\author[2]{Han Lin Shang}
\author[3]{Yi He}

\affil[1]{Research School of Finance, Actuarial Studies and Statistics\\
The Australian National University, Australia\\
\texttt{adam.nie@anu.edu.au}, \texttt{yanrong.yang@anu.edu.au}}

\affil[2]{Department of Actuarial Studies and Business Analytics\\
Macquarie University, Australia\\
\texttt{hanlin.shang@mq.edu.au}}

\affil[3]{School of Mathematical Sciences\\
Eastern Institute of Technology, Ningbo, China\\
\texttt{yihe@eitech.edu.cn}}

\date{}

\begin{document}

\maketitle

\begin{abstract}
This article explores a general factor structure for high-dimensional nonstationary functional time series, encompassing a wide range of factor models studied in the existing literature. We investigate the asymptotic spectral behaviors of the sample covariance operator under this general data structure. A novel fundamental sufficient condition, formulated in terms of a newly introduced effective rank tailored to this setup, is established under which empirical eigen-analysis yields spurious results, rendering sample eigenvalues and eigenvectors unreliable for accurately recovering the underlying factor structure. This generalizes the results of \cite{OnatskiWang2021} from typical high-dimensional time series (HDTS) to the more intricate functional framework. The newly defined effective rank is rigorously analyzed through a decomposition of the effects attributable to functional factor loadings and functional factors. Contrary to the findings in the HDTS setting, empirical eigen-analysis of models with only a small number of strong non-stationary factors may still produce spurious limits in the functional framework. Therefore, additional caution is warranted when applying covariance-based statistical methods to potentially nonstationary functional data. Simulation studies are performed to determine conditions under which spurious limits occur. Real data analysis on age-specific mortality rate data from multiple locations is conducted for evidence of spurious factors induced by empirical eigen-analysis.
\end{abstract}

\noindent\textbf{Keywords:}
Factor Structure; High-dimensional Functional time series; Non-stationarity; Sample Covariance Operator; Spurious Analysis.

\section{Introduction}
Functional data analysis concerns random observations that take values in spaces of functions, typically square-integrable function spaces over compact intervals. Such data have gained substantial attention over the past few decades, both among theoreticians and researchers in applied fields \citep[see, e.g.,][]{WangChiouMueller2016, KonerStaicu2023, RamsaySilverman1997}. To address the inherent infinite-dimensional nature of functional data, dimension-reduction techniques—such as functional principal component analysis and functional factor models—are widely employed \citep{HallHosseini-Nasab2006, BenkoHardleKneip2009, Shang2014a, HallMullerWang2006a}.

Functional time series constitute a special class of functional data in which the observed random functions exhibit serial dependence. In the univariate functional time series setting, a substantial body of literature has extended classical time series methodologies to the functional framework \citep[see, e.g., and references therein]{HormannKokoszka2012, HormannKokoszka2010, HyndmanShang2009, AueNorinhoHormann2015, HorvathKokoszkaRice2014}.

The analysis of multiple functional time series, particularly in high-dimensional regimes where the number of series is comparable to or exceeds the sample size, has received growing attention; see, for example, \cite{GaoShangYang2019, Jimenez-VaronSunShang2024, HallinNisolTavakoli2023, ChangFangQiaoEtAl2024, ChangChenQiaoEtAl2024, GuoQiaoWangEtAl2024, LengLiShangEtAl2024b}. A central difficulty in this setting is the coexistence of multiple sources of dependence: within-curve dependence across the functional domain, serial dependence over time, and cross-sectional dependence across series. Each component may be high-dimensional.
For instance, the Japanese sub-national mortality data analysed in \cite{Jimenez-VaronSunShang2024} comprise of 46 functional time series observed over 50 years, with each annual observation recorded on a grid of 110 points.

Dimension-reduction methods for multivariate and high-dimensional functional time series include extensions of functional principal component analysis \citep[see, e.g.,][]{GaoShangYang2019, ChiouChenYang2014, HappGreven2018, DiCrainiceanuCaffoEtAl2009, ZapataOhPetersen2022} and functional factor models \citep[see, e.g.,][]{Jimenez-VaronSunShang2024, HallinNisolTavakoli2023, ChangFangQiaoEtAl2024, ChangChenQiaoEtAl2024, GuoQiaoWangEtAl2024, LengLiShangEtAl2024b}. These approaches approximate the data by low-rank representations and are effective when the underlying data-generating mechanism is itself approximately low-rank.


Many of the classical methods discussed above are primarily developed under the assumption of stationarity. In recent years, however, increasing attention has been devoted to the non-stationary setting \cite[see, e.g.,][]{vanDelftEichler2018, vanDelftDette2021, ChangKimPark2016, LiRobinsonShang2023, HorvathKokoszkaRice2014}.
For non-functional data, it is well established that directly applying PCA or factor analysis to high-dimensional non-stationary time series can lead to spurious conclusions. In such settings, temporal dependence may dominate cross-sectional dependence, even when strong common factors are present. As a result, the leading eigenvalues obtained from the standard PCA procedure primarily reflect serial dependence rather than genuine cross-sectional correlation. Consequently, the extracted components may fail to represent the underlying factor structure and instead capture persistent time dynamics.

This issue was rigorously studied by \cite{OnatskiWang2021}, who showed that standard PCA may produce spurious factors when applied to high-dimensional integrated time series lacking a genuine low-rank structure. Even in the absence of true common factors, the leading sample principal components can explain a large fraction of the total variance, leading to erroneous inference. They further established that, under high-dimensional asymptotics, the sample eigenvalues and eigenvectors converge to functionals of a Wiener process, irrespective of the underlying covariance structure, indicating that strong temporal dependence can dominate the spectral behavior.
Building on this framework, \cite{HeZhang2024} extended the analysis to models with more general forms of temporal dependence beyond the unit root case. They also derived the asymptotic distributions of the sample eigenvalues, which enable formal tests to distinguish genuine factors from spurious ones.
To mitigate the occurrence of spurious factors in practice, \cite{ZhangGaoPanYang2025} proposed a data-adaptive method to identify the underlying structure of high-dimensional time series before conducting eigen-analysis.

The present paper investigates the emergence of spurious phenomena in high-dimensional functional time series, with an extra functional dimension. 
We consider a general high-dimensional functional factor model with non-stationary factors that encompasses several existing frameworks in the literature \citep[see, e.g.,][]{HallinNisolTavakoli2023, TavakoliNisolHallin2023, GuoQiaoWangEtAl2024, LengLiShangEtAl2024b}.
In parallel with results from the high-dimensional literature, under mild rank-type conditions on the covariance operator of the process, the empirical eigenvalues and eigen-functions converge to a spurious limit that fails to represent the genuine covariance structure. As a result, methodologies relying on functional principal component analysis can produce inconsistent or misleading conclusions in the presence of non-stationarity.

We generalize the core results of \cite{OnatskiWang2021} to the setting of high-dimensional functional time series (see Theorem \ref{theorem - main}). The infinite-dimensional nature of functional observations introduces substantial technical challenges and gives rise to asymptotic behavior that is qualitatively different from that observed in finite-dimensional settings. In particular, the conclusions of our main theorems diverge significantly from those of \cite{OnatskiWang2021}, highlighting structural features that are intrinsic to functional data. We elaborate on these distinctions below.

\cite{OnatskiWang2021} introduce a sufficient condition for the emergence of a spurious limit based on the notion of an “effective rank,” which characterizes the extent to which the model can be well approximated by a low-rank structure. They show that models with a large effective rank are prone to exhibiting spurious limits.
In our setting, however, the concept of effective rank becomes more challenging due to an additional layer of dependence inherent in functional data. For high-dimensional (or multivariate) functional time series, the covariance operator encodes both cross-sectional dependence across coordinates and dependence along the functional domain. Disentangling these two sources of dependence requires additional technical work in order to derive results that admit a clear interpretation; see Theorem~\ref{theorem - upper lower bounds} and Corollary~\ref{corollary}.

Furthermore, the existing theoretical framework does not accommodate models with a fixed number of strong factors. In contrast, this restriction is relaxed in our setting due to the intrinsic high dimensionality of functional data. Specifically, Theorem \ref{theorem - upper lower bounds} establishes that even models driven by only a small number of strong non-stationary factors may converge to the spurious limits characterized in Theorem \ref{theorem - main}.
This constitutes a substantial departure from the settings considered in \cite{OnatskiWang2021} and \cite{HeZhang2024}, where a finite number of strong factors are treated as genuine and thus do not converge to spurious limits. Our findings indicate that, in the functional framework, the presence of an apparently well-defined factor structure does not safeguard against spurious eigenstructure in practice. Consequently, particular caution is warranted when applying covariance-based methods to non-stationary functional time series, even in the presence of strong factors.

The remainder of the paper is organized as follows. Section \ref{section - model} introduces the model under consideration, and Section \ref{section - examples} presents several illustrative examples together with their connections to the existing literature. Section \ref{section - covariance definition} defines the principal mathematical objects of interest, namely a class of sample covariance operators, and introduces the notation and preliminary results required for the development of our theory.
Section \ref{section - assumptions} states and discusses the main assumptions underlying our theoretical analysis, while the primary results are presented in Section \ref{section - main results}. Further discussion of the key sufficient condition, Assumption \ref{assumptions - Omega rank}, is provided in Section \ref{section - remarks on assumption 2}, where we also establish additional theoretical results that clarify the circumstances under which spurious limits arise.
Section \ref{section - simulation} reports simulation results and examines their implications in light of the theory. In Section \ref{section - empirical}, we apply our methodology to empirical datasets. Finally, Section \ref{section - conclusion} summarizes the main findings and proposes a conjecture motivated by several notable simulation patterns for future investigation.

Finally, we will introduce some notations for the rest of the paper. 
Throughout the paper, for a separable Hilbert space $\cH$, we write $\<\cdot, \cdot\>_\cH$ and $\norm{\cdot}_\cH$ for its inner product and the associated norm. 
We will routinely drop the subscript $\cH$ and simply write $\<\cdot, \cdot\>$ and $\norm{\cdot}$ when the context is clear. 
For elements $f,g$ on $\cH$, we write $f\otimes g$ for the bounded linear operator $h\mapsto f \<h, g\>$. For a vector $x = (x_1, \ldots, x_n)\in\Rm^n$, we write
$\norm{x}_0=|\{ i\le n, x_i\ne0\}|$ for the number of non-zero coordinates of $x$  and $\norm{x}_{p} = (\sum_{i\le n} x_i^p)^{1/p}$ for the usual $\ell^p$ norm of $x$.
For $p\in\mathbb N$, we write $\cH^p = \oplus_{k=1}^p \cH$ for the external direct sum of $p$ copies of $\cH$ endowed with the usual inner product. Elements of $\cH^p$ will be denoted as $f = (f_1, \ldots, f_p)$ where $f_i\in\cH$ for $1\le i\le p$. 
For a bounded linear operator $A$ on $\cH$, we write $\norm{A}$ and $A^*$ for the operator norm and the adjoint of $A$ respectively. The rank, the trace, the nuclear norm and the Hilbert-Schmidt norm of $A$ will be denoted as $\norm{A}_0$,
$\<A\>$, $\norm{A}_1$ and $\norm{A}_2$ respectively, whenever they are finite. For Hilbert-Schmidt operators $A,B$ on $\cH$, we write $\<A,B\> = \<A^*B\>$ for the natural inner product in the Hilbert space of Hilbert-Schmidt operators on $\cH$. 
For sequences $(a_n)_{n\in\mathbb N}$ and $(b_n)_{n\in\mathbb N}$, we write $a_n\lesssim b_n$ if $a_n\le c b_n$ for all $n$ for some positive constant $c$ independent of $n$. We write $a_n\sim b_n$ if $a_n\lesssim b_n$ and $b_n\lesssim a_n$. We write $a_n\ll b_n$ or $a_n = o(b_n)$ if $a_n/b_n \to 0$ as $n\to\infty$ and $a_n\gg b_n$ if $b_n\ll a_n$. Finally, we use $O_p$ and $o_p$ to denote the usual notions of stochastic compactness and convergence in probability.

\section{Setup: Model and Assumptions}

\subsection{A General Functional Factor Model}\label{section - model}

Let $\cH=L^2(\cI)$ be the Hilbert space of square integrable functions on some compact interval $\cI$ and $\cH^p=\oplus_{k=1}^p \cH$ be the external direct sum of $p$ copies of $\cH$ endowed with the usual inner product. 
We consider an integrated $p$-dimensional functional time series $X = (X_1, \ldots, X_T)$ of length $T$ taking values on $\cH^p$. We assume $X$ follows the functional factor model below
\begin{align*}
	X_{t}(u)  = (\Psi F_t)(u) + \zeta_t(u),\quad u\in\cI,
	\numberthis\label{equation - model}
\end{align*}
where $F_t$ is an integrated functional time series on $\cH^K$ for some $K<p$, the function $\Psi$ is a bounded linear operator from $\cH^K$ to $\cH^p$, and $(\zeta_t)_t$ is a stationary functional time series on $\cH^p$ with mean zero and covariance operator $C_\zeta$. 
It is natural to view \eqref{equation - model} as a functional factor model with non-stationary factors $F_t$ and factor loading operator $\Psi$. 
This functional factor model takes a general form in the sense of  both factors and loadings being functions, which covers common factor styles in relevant literature \cite{HallinNisolTavakoli2023, GuoQiaoWangEtAl2024} and \cite{LengLiShangEtAl2024b}.

Analogous to \cite{OnatskiWang2021}, the functional factor $(F_t)_t$ is assumed to be a random walk on $\cH^K$ of the form
$F_{t} = \sum_{s=1}^t \epsilon_s$
where $\epsilon_{t} = (\epsilon_{1t}, \ldots, \epsilon_{Kt})\tp$ 
and $(\epsilon_{kt})_{kt}$ is a collection of independent random functions in $\cH$ with mean zero and covariance operator $C_\epsilon$. 
The linear operator $\Psi$ is assumed to be a matrix of integral operators, i.e., there exists a collection
$(\Psi_{ik})_{i\le p, k\le K}$ of integral operators such that
\begin{align*}
	(\Psi \epsilon_t)(u)_i = 
	\sum_{k=1}^K \int_\cI \Psi_{ik}(u,v) \epsilon_{kt}(v) dv,\quad u\in\cI,
\end{align*}
where $\Psi_{ik}(\cdot, \cdot)$ denotes the integral kernel of the integral operator $\Psi_{ik}$. This structure is also adopted by \cite{LengLiShangEtAl2024b}. 
We will write $\Psi(\cdot, \cdot)$ as the matrix of kernel functions $(\Psi_{ik}(\cdot, \cdot))_{ik}$ and
write the above equation in the matrix form
\begin{align*}
	(\Psi \epsilon)(u) = \int_\cI\Psi(u,v) \epsilon(v) dv,\quad u\in\cI. 
\end{align*}
Additional technical assumptions on the model are discussed in Section \ref{section - main results}.

\subsection{Model Generality}\label{section - examples} 
Our framework is built upon a highly general functional factor structure in which both the factors $F$ and the factor loadings $\Psi$ are functional objects. This formulation is conceptually related to the dual functional factor model proposed by \cite{LengLiShangEtAl2024b}; however, their analysis is restricted to stationary time series.
The fully functional factor structure specified in \eqref{equation - model} encompasses a broad class of existing models as special cases. We now illustrate this generality by presenting several concrete examples and discussing their connections to the literature.
\begin{example}\label{eg tavakoli}(Functional Loadings).

	Suppose the factors $F_{kt}$ takes the form of
	\begin{align*}
	 	F_{kt}(\cdot) = \tilde F_{kt} g(\cdot)
	\end{align*} 
	for some deterministic function $g\in \cH$ and real random variables $(\tilde F_{kt})_{kt}$. Set 
	$$\td \Psi(\cdot) = \int \Psi(\cdot, v) g(v) dv,$$ 
	then our factor structure in  \eqref{equation - model} reduces to the functional factor model of \cite{HallinNisolTavakoli2023} and \cite{TavakoliNisolHallin2023} with factors $(\tilde F_{kt})_{kt}$ and factor loading operator $\td \Psi$. In this case the factor loading is a functional object while the factors is given by a multivariate time series on $\Rm^K$.
\end{example}

\begin{example}\label{eg Guo}(Functional Factors).

	Suppose the factor loadings are given by 
	\begin{align*}
	 	\Psi_{ik}(u,v) = \delta(u-v) \td \Psi_{ik}
	\end{align*} 
	where $\td \Psi = (\td\Psi_{ik})_{ik}$ is a $p\times K$ real matrix and $\delta$ is the Dirac $\delta$ function at zero. Then we have
	\begin{align*}
		(\Psi F_t)(u) = \int_\cI \delta(u-v) \td\Psi F_t(v) dv 
		= \td \Psi F_t(u)
	\end{align*} 
	and our factor structure in \eqref{equation - model} reduces to the functional factor model recently proposed by \cite{GuoQiaoWangEtAl2024}. In this case the factors are functional objects while the factor loadings are given by a  $p\times K$ real matrix. 
\end{example}

We note that, under the factor structure in Example \ref{eg tavakoli}, both the cross-sectional dependence and the dependence along the functional domain are entirely induced by the factor loadings. In contrast, under the structure in Example \ref{eg Guo}, the cross-sectional dependence is determined by the factor loading matrix, whereas the dependence along the functional domain is jointly driven by the latent factors and the loading matrix.
In our general framework, both types of dependence emerge from the interaction between the factors and the factor loadings, as will become evident from the main theoretical results.

The fully functional factor structure is also closely related to the study of non-stationarity and cointegration in functional time series, topics that have attracted increasing attention in the recent literature \citep[see, e.g.,][and references therein]{BeareSeoSeo2017, ChangKimPark2016}. We present a simplified example below to illustrate these connections.


\begin{example}(Cointegrated Functional Time Series).

Suppose $p=K=1$ and $\Psi$ is a bounded linear operator on $\cH$ of finite rank. Writing
\begin{align*}
\Delta X_t = \Psi \epsilon_t + \Delta \zeta_t,
\end{align*}
our model in~\eqref{equation - model} becomes the the Beveridge-Nelson decomposition of the cointegrated functional time series studied in \cite{BeareSeoSeo2017} and \cite{ChangKimPark2016}. The rank of $\Psi$ here represents the dimension of the attractor space, i.e. the vectors $h$ in the kernel of $\Psi$ are the directions such that the projection $\< h , X_t\>$ is of $I(0)$.  Our model can be extended to be compatible with recent results that study integrated time series of order $I(d)$ where $d$ can be fractional \citep[see][]{LiRobinsonShang2023, BeareSeo2020}. We do not pursue such extensions in our current work. 
\end{example}

Finally, although our model is formulated with the factors 
$F$ evolving as a random walk on $\cH$, the framework in fact accommodates a broad class of temporal dependence structures, in the spirit of \cite{OnatskiWang2021}. In particular, the specification in \eqref{equation - model} can be interpreted as the Beveridge–Nelson decomposition of $I(1)$ linear process on $\cH$ exhibiting more general short-run dynamics.

As demonstrated by our main results, the random-walk component of the factors asymptotically dominates the behavior of the sample covariance operator, while mild deviations from pure random-walk dynamics have asymptotically negligible effects. We conclude this section with an additional example illustrating the range of temporal dependence structures encompassed by our framework.

\begin{example}\label{example - ARIMA}(Multivariate Functional Time Series).

	Suppose the factors $(F_1, \ldots, F_T)\subseteq \cH^K$ follow the model
	\begin{align*}
		F_{t}(u) = \sum_{n=1}^N   Y_{t}^n  \phi_n(u), 
	\end{align*}
	where $(\phi_n)$ is a complete orthonormal set of functions on $\cH$ and for each $n$, the vector-valued coefficients $(Y_t^{n}, t=1, \ldots, T)$ follow a $K$-dimensional vector ARIMA model of the form
	\begin{align*}
		\Phi_n(L) (1-L) (Y_t^n - Y_0^n) = \Theta_n(L)  \epsilon_t^n
	\end{align*}
	where $\Phi_n$ and $\Theta_n$ are the matrix valued autoregressive and moving average polynomials. Assuming causality of the ARIMA processes and writing $\Psi_n(L):= \Phi_n(L)^{-1} \Theta_n(L)$, the multivariate Beveridge-Nelson decomposition of $Y_t^n$ is given by
	\begin{align*}
		Y_t^n = \Psi_n(1) \sum_{s\le t} \epsilon_s^n + \zeta^n_t,
	\end{align*}
	where $(\zeta^n_t)_t$ is a stationary process given by $\zeta^n_t = -\sum_{k=0}^\infty \sum_{i=k+1}^\infty \Psi_i \epsilon_t^n$. 
	Now, let $\Psi:\cH^K\to\cH^K$ be a linear operator given by $\Psi(a \phi_n) := \Psi_n(1) a \phi_n$ where $a$ is an arbitrary vector in $\Rm^K$. The factor process can then be written as 
	\begin{align*}
		F_t(u) = \Psi \sum_{s\le t} \sum_{n=1}^N     \epsilon_s^n   \phi_n(u)
		+
		\sum_{n=1}^N \zeta_t^n \phi_n(u).
	\end{align*}
    This representation shows that, the multivariate functional time series $F_t(u)$ is also a special case of the functional factor model in (\ref{equation - model}). 
\end{example}
In light of Examples \ref{eg tavakoli}–\ref{example - ARIMA}, the proposed functional factor model accommodates a rich variety of cross-sectional and temporal dependence structures.

\subsection{The Sample Covariance Operator Matrix}\label{section - covariance definition}
The analysis of the model~\eqref{equation - model} and the estimation of factors and factor loadings typically rely on the eigen-analysis of the sample covariance operator $\frac{1}{T}\sum^{T}_{t=1}[X_t\otimes X_t]$ on $\cH^p$. 
In Euclidean space, it is common practice to leverage on the duality between the column space and the row space of a matrix and consider the $T\times T$ Gram matrix instead.
In the context of functional data, this estimation approach has also been taken in existing work \citep{BenkoHardleKneip2009,LengLiShangEtAl2024b,TavakoliNisolHallin2023} when estimating factors and factor loadings. We provide a formal justification of this duality in Lemma \ref{lemma - same spectrum} in the Appendix.

Let $\overline X(u) := \frac{1}{T}\sum_{t=1}^T X_t(u)$ be the sample mean. By Lemma \ref{lemma - same spectrum}, the sample covariance operator of $X$ has the same non-zero spectrum as the Gram
matrix $\hat S = (\hat S_{st})$ given by
\begin{align*}
	\hat S_{st}:& =  \frac{1}{p} \sum_{i=1}^{p} 
	\<X_{is} - \overline X_i, X_{it} - \overline X_i\>
	\numberthis\label{equation - gram matrix}
\end{align*}
Let $M$ denote the orthogonal projection onto the orthogonal complement of the $T$-dimensional vector of ones, i.e. $M:= I_T - T^{-1}1_T 1_T'$. Then we may write
\begin{align*}
	\hat S & =  \frac{1}{p} \int_\cU (X(u) - \overline X(u))'(X(u) - \overline X(u)) du
	 =   \frac{1}{p} \int_\cU M X(u)' X(u)M du
\end{align*}
As will be shown, in cases where spurious behaviors arise, this Gram matrix is dominated by the Gram matrix of the non-stationary part of the original data, which we will denote as 
\begin{align*}
	\tilde S:& =  \frac{1}{p} \int_\cU  M(\Psi F)(u)'  (\Psi F)(u) Mdu.
	\numberthis\label{equation - tilde S}
\end{align*}
Write 
$\Theta$ for the $T\times T$ upper triangular matrix with entries equal to $1$ so that, based on the random walk structure on $F(u)$,  
\begin{align*}
	(\Psi F)(u) = (\Psi \epsilon)(u) \Theta
	= \int_\cI \Psi(u,v) \epsilon(v)  \Theta dv. 
\end{align*} 
By Fubini's theorem, 
the Gram matrix $\tilde S$ can be written as
\begin{align*}
	\tilde S
	& =  
	\frac{1}{p}M \Theta'
	\int_\cI 
	\left( \int_{\cI} \epsilon(v)' \Psi(u,v)' dv \int_\cI \Psi(u,w) \epsilon(w) dw   \right) du \Theta M
	  \\
	& =  
	\frac{1}{p} M \Theta'
	\iint_{\cI^2}  \epsilon(v)'\left( \int_\cI \Psi(u,v)'\Psi(u,w)  du \right)\epsilon(w)  dv dw \Theta M
	 =:
	\frac{1}{p} M \Theta' W \Theta M,
\end{align*}
where $W$  denotes the $T\times T$ random matrix given by
\begin{align*}
	W: = \iint_{\cI^2} \epsilon(v)' \Omega(v,w) \epsilon(w) dv dw,
	\numberthis\label{eq- def of W}
\end{align*}
and $\Omega(v,w)$ is the $K\times K$ matrix given by
\begin{align*}
	\Omega(v,w): =  \int_\cI \Psi(u,v)' \Psi(u,w) du, \quad v,w\in\cI. 
	\numberthis\label{eq- def of Omega}
\end{align*}
Note that the transpose of $\Omega$ is given by $\Omega(v,w)' =  \int_\cI \Psi(u,w)'\Psi(u,v)  du = \Omega(w,v)$, i.e. for any $i,j$ we have $\Omega_{ji} = \Omega_{ij}^*$. 
Proposition \ref{proposition - matrix of op} below shows that $\Omega$ is a self-adjoint operator on~$\cH^K$.

In the analysis of the Gram operator $\tilde{S}$, we will often encounter certain compositions between matrices of operators. For convenience, we provide a new definition on the operator product and state some of its basic properties.  
\begin{definition}\label{definition - C Omega}(A New Operator-Matrices Product)

	Let $C_1$ and $C_2$ be bounded linear operators on $\cH$ and $\Omega = (\Omega_{ij})$ be a $K\times K$ matrix of bounded linear operators on $\cH$ for some fixed $K$. 
	We define 
    $$C_1 \Omega C_2:=(C_1 \Omega_{ij} C_2)_{ij}$$
    to be the $K\times K$ matrix of operators where $C_1 \Omega_{ij} C_2$ is the usual composition between operators. The operator $C_1 \Omega C_2$ acts on $\cH^K$ by
	\begin{align*}
		(C_1 \Omega C_2 f)_i := \sum_{j=1}^K C_1 \Omega_{ij} C_2 f_j, \quad 
		(f_j)_{j=1}^K\in\cH^K.
	\end{align*}
\end{definition}

We first state some linear algebraic properties for these types of operators. The proofs can be found in Section~\ref{section - proofs}.
\begin{proposition}\label{proposition - matrix of op}
	Let $C_1$ and $C_2$ be given as in Definition \ref{definition - C Omega}, and $\Omega$ is defined in (\ref{eq- def of Omega}). Then
	\begin{enumerate}
		\item \label{proposition - matrix of op - adjoint}
		The adjoint of $\Omega$ is given by $(\Omega^*)_{ij} = \Omega_{ji}^*$, hence $\Omega$ is self-adjoint if and only if $\Omega^*_{ij} = \Omega_{ji}$. Furthermore, the adjoint of $C_1 \Omega C_2$ is given by 
		\begin{align*}
			(C_1 \Omega C_2)^* =C_2^* \Omega^* C_1^*.
		\end{align*}
		\item \label{proposition - matrix of op - trace}
		Suppose $C_1 \Omega_{ii} C_2$ is a trace-class operator on $\cH$ for all $i=1, \ldots, K$. Then $C_1 \Omega C_2$ is a trace class operator on $\cH^K$ and its trace, denoted as $\<C_1 \Omega C_2\>$, is given by
		\begin{align*}
			\<C_1 \Omega C_2\> =  \sum_{i=1}^K \<C_1 \Omega_{ii} C_2\>.
		\end{align*}

		\item \label{proposition - matrix of op - HS}
		Suppose the operator $C_1 \Omega_{ij} C_2$ is Hilbert-Schmidt on $\cH$ for all $i,j=1, \ldots, K$. Then the operator $C_1 \Omega C_2$ is Hilbert-Schmidt on $\cH^K$ and
		\begin{align*}
			\norm{C_1 \Omega C_2}_2^2:= \langle C_1 \Omega  C_2 C_2^*\Omega^* C_1^* \rangle
			=\sum_{ij} \norm{C_1 \Omega_{ij} C_2}_2^2. 
		\end{align*}
	\end{enumerate}
\end{proposition}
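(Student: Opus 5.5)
The plan is to treat a $K\times K$ matrix of bounded operators $A=(A_{ij})$ as the bounded operator on $\cH^K$ given by $(Af)_i=\sum_j A_{ij}f_j$. We then reduce every claim to the coordinate structure of $\cH^K$. Let $\iota_j:\cH\to\cH^K$ denote the injection into the $j$-th slot and $\pi_i=\iota_i^*$ the projection onto the $i$-th coordinate. Then $A_{ij}=\pi_i A\iota_j$ and $A=\sum_{ij}\iota_i A_{ij}\pi_j$. By Definition \ref{definition - C Omega}, $C_1\Omega C_2$ is the matrix with entries $C_1\Omega_{ij}C_2$, so all three items follow once we know the corresponding facts for an arbitrary operator matrix $A$. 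We apply those facts to $A=\Omega$ and to $A=C_1\Omega C_2$.

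For item \ref{proposition - matrix of op - adjoint}, we compute directly. For $f,g\in\cH^K$,
\begin{align*}
\<Af,g\>_{\cH^K}=\sum_i\Big\<\sum_j A_{ij}f_j,g_i\Big\>=\sum_j\Big\<f_j,\sum_i A_{ij}^*g_i\Big\>,
\end{align*}
so $(A^*)_{ji}=A_{ij}^*$. The self-adjointness criterion is then immediate. Applying this to $A=C_1\Omega C_2$ gives $((C_1\Omega C_2)^*)_{ij}=(C_1\Omega_{ji}C_2)^*=C_2^*\Omega_{ji}^*C_1^*=C_2^*(\Omega^*)_{ij}C_1^*$, which is exactly $C_2^*\Omega^*C_1^*$ in the sense of Definition \ref{definition - C Omega}. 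For the specific $\Omega$ in \eqref{eq- def of Omega}, the kernel identity $\Omega(v,w)'=\Omega(w,v)$ noted before the definition gives $\Omega_{ij}^*=\Omega_{ji}$, so $\Omega$ is self-adjoint.

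For item \ref{proposition - matrix of op - HS}, fix an orthonormal basis $(e_n)$ of $\cH$. Then $\{\iota_j e_n\}_{j,n}$ is an orthonormal basis of $\cH^K$, and
\begin{align*}
\norm{A}_2^2=\sum_{j,n}\norm{A\iota_je_n}^2=\sum_{j,n}\sum_i\norm{A_{ij}e_n}^2=\sum_{ij}\norm{A_{ij}}_2^2<\infty.
\end{align*}
This shows $A$ is Hilbert--Schmidt, with the stated norm. The identity $\norm{A}_2^2=\<AA^*\>$ is the standard one, with $A^*$ taken from item 1. Applying this to $A=C_1\Omega C_2$ yields item 3.

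Item \ref{proposition - matrix of op - trace} is the delicate step, because trace class is not characterised by a basis sum in the same way as the Hilbert--Schmidt property. Only the diagonal blocks are assumed trace class, so the off-diagonal blocks carry no hypothesis. Since $\norm{\iota_i}\le1$, the operator $\iota_iB\pi_i$ is trace class whenever $B$ is, with $\norm{\iota_iB\pi_i}_1\le\norm{B}_1$. I would therefore try to show that the off-diagonal contributions are controlled by the diagonal ones. The natural route uses positivity, which holds in the case of interest: $\Omega$ in \eqref{eq- def of Omega} is of the form $\Psi^*\Psi$, hence positive. In that case $C\Omega C^*$ is positive with $C\Omega C^*=(C\Psi^*)(C\Psi^*)^*$, and trace-class diagonal blocks mean $C\Psi^*$ is Hilbert--Schmidt. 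The product of two Hilbert--Schmidt operators is trace class, so $C\Omega C^*$ is trace class.

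In the general (non-positive) setting, I would instead interpret the hypothesis as requiring that the full matrix be trace class, or else rely on the fact that only the trace functional is used later. Either way, once $A$ is trace class, its trace can be computed in the basis $\{\iota_je_n\}$:
\begin{align*}
\<A\>=\sum_{j,n}\<A\iota_je_n,\iota_je_n\>=\sum_{j}\sum_n\<A_{jj}e_n,e_n\>=\sum_j\<A_{jj}\>.
\end{align*}
This gives the formula. I expect the justification that trace-class diagonal blocks force $A$ itself to be trace class, via the positivity and Hilbert--Schmidt factorisation argument, to be the main obstacle.
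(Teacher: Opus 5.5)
Your proposal is correct wherever it commits, and on item 2 it is more careful than the paper.

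\textbf{Items 1 and 3.} Item 1 is the same computation as the paper's. For item 3 you take a different route: you compute $\norm{A}_2^2=\sum_{j,n}\norm{A\iota_je_n}^2$ directly in the product basis. This is legitimate because that sum of nonnegative terms does not depend on the basis. The paper instead gets item 3 by applying item 2 to $AA^*$. That route works only because $AA^*\ge 0$, which the paper never mentions, so your version is cleaner and self-contained. Once trace class is known, your trace formula in item 2 is the same product-basis computation as the paper's.

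\textbf{Item 2 is false as stated.} The step you could not close cannot be closed. Here is a counterexample, which uses Dirac-type kernels as Example~\ref{eg Guo} does:
\begin{itemize}
\item[(i)] Take $\cH=L^2[0,1]$, $K=2$, and $\Psi=(I\;\;U)$ with $(Uf)(u)=f(u+\tfrac12 \bmod 1)$; pad with zero rows if you want $K<p$.
\item[(ii)] Then $\Omega=\Psi^*\Psi\ge 0$ has blocks $\Omega_{11}=\Omega_{22}=I$, $\Omega_{12}=U$ and $\Omega_{21}=U^*$.
\item[(iii)] Let $C_1$ and $C_2$ be multiplication by $1_{[0,1/2)}$ and $1_{[1/2,1]}$ respectively.
\item[(iv)] Then $C_1\Omega_{ii}C_2=C_1C_2=0$, which is trace class.
\item[(v)] But $C_1\Omega_{12}C_2$ maps $L^2[1/2,1]$ isometrically onto $L^2[0,1/2)$, so $C_1\Omega C_2$ is not even compact.
\end{itemize}
The paper's proof has exactly the gap you identified. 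It evaluates $\sum_n\<C_1\Omega C_2\tilde e_n,\tilde e_n\>$ in one basis and calls this the trace. In the example above that sum is $0$, for an operator that is not compact.

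\textbf{What to do instead.} You need an extra hypothesis, not a cleverer argument. Drop the fallback ``only the trace functional is used later'': a basis sum of a non-trace-class operator is not a trace.
\begin{itemize}
\item[(i)] Your positivity argument handles $C_2=C_1^*$. Here $C_1\Omega C_1^*=BB^*$ with $B=\mathrm{diag}(C_1)\Psi^*$, and $\norm{B}_2^2=\sum_j\<C_1\Omega_{jj}C_1^*\>$. This covers $\cC=C_\epsilon^{1/2}\Omega C_\epsilon^{1/2}$.
\item[(ii)] The paper also uses $\<C_\epsilon\Omega\>=\sum_i\<C_\epsilon\Omega_{ii}\>$, where $C_1=C_\epsilon$ and $C_2=I$, which is not of the symmetric form. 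There use the ideal property instead. The operator $\mathrm{diag}(C_\epsilon)$ is trace class on $\cH^K$ and $\Omega$ is bounded, so the product is trace class and your basis computation applies.
\end{itemize}
In general, any one of the following suffices: $C_2=C_1^*$; one of $C_1$, $C_2$ is trace class; or both are Hilbert--Schmidt.
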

\begin{remark}
We introduce this new form of product to provide a convenient representation of the covariance structure for a large collection of functional time series. In typical settings, $C_1$ and $C_2$ correspond to the covariance operators of the latent factors, whereas $\Omega$ captures the functional dependence present in the factor loadings. The interaction between these two sources of dependence is described through the product defined above.
\end{remark}

\section{Asymptotic Theory for Spurious Phenomenon}\label{section - main results}
\subsection{Assumptions}\label{section - assumptions}
The following set of assumptions are made throughout the paper. 
\begin{assumption}[Moments Condition]\label{assumptions}
	Assume $p\to\infty$ as $T\to\infty$. Furthermore, suppose that
	\begin{enumerate}
		\item \label{assumptions - epsilon iid}
		The process $\epsilon:=(\epsilon_{it})_{it}$ is a $K\times T$ matrix of independent random elements on $\cH$ with $\Em[\epsilon_{it}] = 0$ and
		covariance operator $C_\epsilon:=\Em[\epsilon_{it}\otimes \epsilon_{it}]$ for all $i,t$.

		\item \label{assumptions - epsilon 4th moemnt} The fourth moment of $\epsilon_{it}$ is uniformly bounded in the sense that
		\begin{align*}
			\kappa_4^*:= \sup_{i,t}\sup_{u,v, \alpha, \beta}
			\frac{\Em[\epsilon_{it}(u) \epsilon_{it}(v) \epsilon_{it}(\alpha) \epsilon_{it}(\beta)]}{C_\epsilon(u,v) C_\epsilon(\alpha, \beta)}<\infty.
		\end{align*}
	\end{enumerate}
	
\end{assumption}
\begin{remark}
The condition $p\to\infty$ places the analysis in a high-dimensional regime, while no restriction is imposed on the relative growth rates of $p$ and $T$. 

Assumption \ref{assumptions}.\ref{assumptions - epsilon iid} ensures that the latent factors $\epsilon$ follow a covariance stationary process with $K$ independent components. 
The requirement that the elements of $\epsilon$ be independent across $t=1, 2, \ldots, T$ may appear restrictive. However, this assumption can be readily relaxed to accommodate common time-series structures such as ARMA models; see Example \ref{example - ARIMA} in Section \ref{section - examples}. The present formulation of the factor structure is adopted for convenience, as it clearly distinguishes the $I(1)$ and $I(0)$ components of the model, which is the main focus of this article.

Assumption \ref{assumptions}.\ref{assumptions - epsilon 4th moemnt} simply ensures that the latent factors have uniformly bounded $4$\textsuperscript{th} moment. The appearance of the covariance function $C_\epsilon$ in the denominator takes into account the fact that $\epsilon_{it}$ is an element of a Hilbert space and has a nontrivial covariance structure. 
\end{remark}

The following assumption on the covariance operators $C_\epsilon$ and $\Omega$ outlines the key condition for the emergence of spurious behaviours in sample covariance-based eigen-analysis.  

\begin{assumption}[Effective Rank]\label{assumptions - Omega rank} 
	As $p, T\to\infty$, the effective rank, based on the operators $C_\epsilon$ and $\Omega$, satisfies
\begin{equation}\label{equation - key condition}
\mathcal{R}:= \langle C_\epsilon^{1/2} \Omega C_\epsilon^{1/2} \rangle/ \norm{C_\epsilon^{1/2} \Omega C_\epsilon^{1/2}} \to\infty.
\end{equation}
\end{assumption}
\begin{remark}
The ratio $\mathcal R$ is the effective rank of the operator $C_\epsilon^{1/2} \Omega C_\epsilon^{1/2}$, a quantity widely used in the high dimensional probability literature \citep{KoltchinskiiLounici2016, RudelsonVershynin2010, Wainwright2019}. A large effective rank indicates that the operator cannot be well approximated by low-rank operators in the operator norm.

It is straightforward to see that the operator $\cC:= C_{\epsilon}^{1/2} \Omega C_\epsilon^{1/2}$, as defined in Definition \ref{definition - C Omega}, is a self-adjoint, non-negative definite operator on $\cH^K$. 
This implies $\norm{\cC}^2\le\norm{\cC}_2^2 \le  \langle \cC \rangle \norm{\cC}$, from which we obtain the estimates ${\langle \cC \rangle}/{\norm{\cC}_2} \le {\langle \cC \rangle^2}/{\norm{\cC}_2^2} \le {\langle \cC \rangle^2}/{\norm{\cC}^2}$. Therefore, Assumption~\ref{assumptions - Omega rank} is equivalent to the condition
\begin{align*}
\norm{C_{\epsilon}^{1/2} \Omega C_\epsilon^{1/2}}_{2}\ll \langle C_\epsilon^{1/2} \Omega C_\epsilon^{1/2}\rangle.			 
\end{align*}
\end{remark}

We now offer several remarks on the interpretation of Assumption~\ref{assumptions - Omega rank}.
\begin{remark}
In the non-functional settings considered by \cite{OnatskiWang2021} and \cite{HeZhang2023}, the effective rank $\mathcal{R}$ is bounded below by the number of non-stationary factors~$K$. Consequently, condition~\eqref{equation - key condition} is ensured under the stronger but interpretable assumption that~$K\rightarrow\infty$. This is also the framework within which \cite{OnatskiWang2021} and \cite{HeZhang2024} develop the asymptotic theory of spurious phenomena.

We work directly under the intrinsic condition~\eqref{equation - key condition} in the functional setting. One reason is that the lower bound $\mathcal R \gtrsim K$ no longer holds due to the presence of the operator $C_\epsilon$. In fact, condition~\eqref{equation - key condition} reflects both the cross-sectional and the functional dependence structures in the data; see Section \ref{section - remarks on assumption 2} for further discussion. This represents a key distinction between the present paper and the existing literature. We show that models with $K \sim p$ factors may exhibit no spurious behaviour, while models with only a few factors may display pronounced spurious effects. Consequently, even models with a small number of genuine non-stationary factors, eigen-analysis based on the sample covariance can produce spurious results in the functional setting. Extra caution is therefore required when applying eigen-analysis to potentially non-stationary functional time series.
It is worth noting that, in Section \ref{section - remarks on assumption 2}, we derive more interpretable results by establishing upper and lower bounds for $\mathcal R$ in common time series models. These results provide a clearer insight into when spurious behaviour arises and how it relates to $K$.
\end{remark}
 
Finally, the following assumption gives a lenient bound on the size of the error time series $\zeta$. 
\begin{assumption}[Error Components]\label{assumptions - zeta} 
Assume that the error term $(\zeta_t, t=1,\ldots, T)$ is a weakly stationary time series on $\cH^p$ with mean zero and covariance operator $C_\zeta  = \Em[\zeta_t\otimes \zeta_t]\in\cH^p$ satisfying
\begin{align*}
\<C_\zeta\> \ll T \<C_\epsilon \Omega\>,\quad T\to\infty. \numberthis\label{equation - condition on zeta}
\end{align*}
\end{assumption}
\begin{remark}
Condition~\eqref{equation - condition on zeta} ensures that $\zeta$ is negligible asymptotically compared to the non-stationary component of the model. 
To illustrate the scope of this condition, suppose that the components $\zeta_{1t}, \ldots, \zeta_{pt}$ of $\zeta_t$ have uniformly bounded covariance operators, in which case condition~\eqref{equation - condition on zeta} is equivalent to $\<C_\epsilon \Omega\> \gg pT^{-1}$. Drawing a parallel with the high dimensional factor modelling literature, it is natural to assume that the strength of the factors increases with $p$, i.e. $\norm{\Omega}\sim p^{1- \delta}$ for some $\delta\in[0,1]$  \citep[see][]{LamYao2012, FanLiLiao2021}. Since $C_\epsilon$ is a trace class operator, condition \eqref{equation - condition on zeta} reduces to $p^\delta T^{-1} = o(1)$, which clearly holds for a wide range of high-dimensional factor models. In particular, condition \eqref{equation - condition on zeta} is trivially satisfied when the factors are ``strong'' in the sense of $\delta=0$. 
\end{remark}

\subsection{Asymptotic Theory}

We are now ready to state the main result of our work.
\begin{theorem}[Asymptotic Spectral Limits]\label{theorem - main}
Let $(\hat\lambda_t)_{t=1}^{T}$ be the eigenvalues of the Gram matrix $\hat S$ arranged in non-ascending order and $(\hat u_n)$ be the corresponding eigenvectors. Suppose Assumptions \ref{assumptions}-\ref{assumptions - zeta} hold, then for any fixed natural number $k$, as $p, T\to\infty$, we have
\begin{enumerate}[(i)]
\item the sample eigenvector $\hat u_k$ satisfies
\begin{align*}
|\<\hat u_k, d_k\>|=1+ o_p(1)
\end{align*}
where $d_k\in \Rm^T$ with the $t$-th coordinate equal to $d_{kt}  = \sqrt{2/T} \cos(\pi k t/T)$;\label{theorem - evec}
\item the sample eigenvalue $\hat \lambda_k$ satisfies
\begin{align*}
\hat \lambda_k = \frac{T^2}{k^2\pi^2 p}\langle C_\epsilon \Omega \rangle  (1+ o_p(1));
\end{align*}
\label{theorem - ev}
\item the percentage of variance explained by $\hat\lambda_k$ satisfies
\begin{align*}
\frac{\hat \lambda_k}{\sum_j \hat \lambda_j} = \frac{6}{(k \pi)^2} + o_p(1). 
\end{align*}
\label{theorem - ev/ev}		
\end{enumerate}
\end{theorem}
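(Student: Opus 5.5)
The proof follows the strategy of \cite{OnatskiWang2021}: reduce $\hat S$ to $\tilde S$, show that the random matrix $W$ in \eqref{eq- def of W} concentrates around a multiple of the identity, and read off the spectrum of the deterministic matrix $M\Theta'\Theta M$.

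\textbf{Step 1: concentration of $W$.} By Assumption \ref{assumptions}.\ref{assumptions - epsilon iid}, the entries of $W$ are
\begin{align*}
W_{st}=\sum_{i,j=1}^K\<\epsilon_{is},\Omega_{ij}\epsilon_{jt}\>.
\end{align*}
Independence across $i$ and $t$ gives $\Em W = \<C_\epsilon\Omega\> I_T$, since $\Em\<\epsilon_{it},\Omega_{ii}\epsilon_{it}\> = \<C_\epsilon\Omega_{ii}\>$ and Proposition \ref{proposition - matrix of op}.\ref{proposition - matrix of op - trace} sums these traces. I will show
\begin{align*}
\norm{W-\<C_\epsilon\Omega\>I_T}\ll T\<C_\epsilon\Omega\>
\end{align*}
in a suitably weighted sense. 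The natural bound is on the second moment of the bilinear forms $a'(W-\Em W)b$ for unit vectors $a,b$ adapted to $\Theta$. For these forms:
\begin{itemize}
\item the off-diagonal entries $W_{st}$ with $s\ne t$ have variance $\norm{C_\epsilon^{1/2}\Omega C_\epsilon^{1/2}}_2^2$, using Proposition \ref{proposition - matrix of op}.\ref{proposition - matrix of op - HS};
\item the diagonal fluctuations are controlled by $\kappa_4^*$ times the same Hilbert--Schmidt quantity.
\end{itemize}
Hence all fluctuations are $O_p(\norm{\cC}_2)$ per entry, where $\cC=C_\epsilon^{1/2}\Omega C_\epsilon^{1/2}$. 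By the remark following Assumption \ref{assumptions - Omega rank}, $\norm{\cC}_2\ll\<\cC\>=\<C_\epsilon\Omega\>$.

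\textbf{Step 2: transfer to $\tilde S$.} Rather than bounding the operator norm of the $T\times T$ fluctuation matrix, which would cost a factor of $T$, I will bound
\begin{align*}
\norm{M\Theta'(W-\Em W)\Theta M}_2^2 .
\end{align*}
Its expectation is a sum over quadruples weighted by entries of $\Theta M M\Theta'$. Since $\norm{\Theta M}_2^2\sim T^2$ and $\norm{\Theta M\Theta'}_2\sim T^2$, this gives
\begin{align*}
\Em\norm{M\Theta'(W-\Em W)\Theta M}_2^2\lesssim T^4\norm{\cC}_2^2,
\end{align*}
with the constant depending on $\kappa_4^*$. Consequently
\begin{align*}
\tilde S=\frac{\<C_\epsilon\Omega\>}{p}M\Theta'\Theta M+\frac{T^2}{p}\,o_p(\<C_\epsilon\Omega\>)
\end{align*}
in Hilbert--Schmidt norm, and therefore also in operator norm.

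\textbf{Step 3: spectrum of $M\Theta'\Theta M$.} This matrix is the demeaned Brownian-motion covariance kernel $\min(s,t)$, discretised. Its eigenvalues are
\begin{align*}
\frac{T^2}{k^2\pi^2}(1+o(1)),
\end{align*}
with eigenvectors asymptotically equal to $d_k$, as computed in \cite{OnatskiWang2021}. The eigenvalues are separated by gaps of order $T^2$, so Weyl's inequality gives part (\ref{theorem - ev}) for $\tilde S$, and Davis--Kahan gives part (\ref{theorem - evec}).

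\textbf{Step 4: from $\tilde S$ to $\hat S$.} Write
\begin{align*}
\hat S=\tilde S+\frac1p\int\big(M(\Psi F)'\zeta M+M\zeta'(\Psi F)M+M\zeta'\zeta M\big).
\end{align*}
The $\zeta$-term has expected trace at most $\<C_\zeta\>T/p$. By Assumption \ref{assumptions - zeta}, this is $o(T^2\<C_\epsilon\Omega\>/p)$, so its operator norm is negligible. The cross terms are handled by Cauchy--Schwarz, using $\<\tilde S\>=O_p(T^2\<C_\epsilon\Omega\>/p)$, which follows from Step 2 together with the trace of $M\Theta'\Theta M$ being about $T^2/6$.

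For part (\ref{theorem - ev/ev}), the total $\sum_j\hat\lambda_j=\<\hat S\>$ has the same decomposition. The trace of $\tilde S$ concentrates at
\begin{align*}
\frac{\<C_\epsilon\Omega\>}{p}\<M\Theta'\Theta M\>=\frac{\<C_\epsilon\Omega\>T^2}{6p}(1+o_p(1)),
\end{align*}
using a variance bound of the same type, and the ratio gives $6/(k\pi)^2$.

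The main obstacle is Step 2. The fourth-moment computation for $\Theta'(W-\Em W)\Theta$ must produce exactly $\norm{\cC}_2$, not a cruder bound such as $\norm{C_\epsilon}\,\norm{\Omega}_2$, while both $\Omega_{ij}$ and $C_\epsilon$ are non-diagonal operators. I will expand each $W_{st}$ in the eigenbasis of $C_\epsilon$ and use Assumption \ref{assumptions}.\ref{assumptions - epsilon 4th moemnt} to bound the mixed fourth moments by $\kappa_4^*$ times products of covariance kernels. After resumming, these collapse to $\<\cC^2\>$ and $\<\cC\>^2$-type terms in which the latter cancels against the centring.
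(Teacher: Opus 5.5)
Your proposal is correct, and it takes a more direct route than the paper.

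\textbf{The paper's route.} It follows Section A.3.1 of Onatski and Wang. It expands $\tilde u_1=\sum_t\alpha_t w_t$ in the left singular vectors of $M\Theta'$ and writes $p\tilde\lambda_1=\sum_{s,t}\alpha_s\alpha_t\sigma_s\sigma_t\, v_s'Wv_t$. It controls this double sum of $T^2$ bilinear forms by a truncation argument, which it cites rather than writes out, to get an upper bound. It uses $w_1$ as a test vector for the lower bound, squeezes out $\alpha_1^2\to1$, and leaves the induction over $k$ and part (iii) to the reference.

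\textbf{Your route.} You use that $\Em W=\<C_\epsilon\Omega\>I_T$ exactly and bound the whole centred part at once. With $E=W-\Em W$ and $P=\Theta M\Theta'$, the covariance lemma applied to coordinate vectors gives
\[
\Em\norm{M\Theta' E\,\Theta M}_2^2=\norm{\cC}_2^2\bigl(\norm{P}_2^2+\<P\>^2\bigr)+\sum_{a}P_{aa}^2\bigl(\Var(W_{aa})-2\norm{\cC}_2^2\bigr),
\]
where $\<P\>=\sum_t\sigma_t^2=(T^2-1)/6$ and $\norm{P}_2^2=\sum_t\sigma_t^4\sim T^4/90$. Since $\norm{M\Theta'E\Theta M}_2^2=\sum_{s,t}\sigma_s^2\sigma_t^2(v_s'Ev_t)^2$, your bound is exactly the paper's double sum, controlled in mean square. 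The summability of $\sigma_t^2/T^2$ is what makes the truncation unnecessary.

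\textbf{What each buys.} Yours is self-contained: Weyl and Davis--Kahan handle every fixed $k$ at once with no induction. Your trace-variance bound ($2\norm{\cC}_2^2\norm{P}_2^2$ plus a diagonal term) proves part (iii), which the paper only cites. The paper's variational squeeze is the established template and gives the eigenvector statement without a separate perturbation theorem.

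\textbf{Two small points.}

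First, the step you flag as the main obstacle is milder than you fear. The fourth-cumulant contribution enters only through the diagonal weights $P_{aa}=(a-1)(T-a+1)/T$, so $\sum_aP_{aa}^2\asymp T^3$. Hence any bound $\max_a\Var(W_{aa})=o(T\<C_\epsilon\Omega\>^2)$ suffices; it need not collapse to $\norm{\cC}_2^2$.

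Second, the Hilbert--Schmidt bound of Step 2 does not by itself control $\<\tilde S\>$, since passing from $\norm{\cdot}_2$ to the trace loses a factor $\sqrt T$. For the cross terms, use one of the following:
\begin{itemize}
\item Markov's inequality on the nonnegative $\<\tilde S\>$, whose mean is $\<C_\epsilon\Omega\>\<P\>/p$;
\item the operator-norm inequality $\norm{A}^2\le\norm{A_f}\norm{A_g}$ of Lemma~\ref{lemma - same spectrum}, together with your bound on $\norm{\tilde S}$.
\end{itemize}
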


\begin{remark}[Spurious Eigenvalues]
Theorem~\ref{theorem - main} establishes the asymptotic limits of the sample eigenvalues and eigenvectors as $T \to \infty$. Part~\eqref{theorem - ev} shows that the largest sample eigenvalues $\hat{\lambda}_k$ converge to deterministic limits. Several observations follow from this result.

First, the true covariance structure of the model, characterized by $C_\epsilon$ and $\Omega$, enters the limit of $\hat{\lambda}_k$ only through a tracial quantity. In particular, the limit of $\hat{\lambda}_k$ depends on the index $k$ solely through the factor $k^{-2}$ and is not related to the $k$-th eigenvalue of $C_\epsilon^{1/2}\Omega C_\epsilon^{1/2}$. Consequently, the sample eigenvalues $\hat{\lambda}_k$ fail to capture meaningful information about the cross-sectional dependence structure of the model.

Second, for any fixed $k$, the empirical eigenvalue $\hat{\lambda}_k$ is of the order $T^2 p^{-1}\langle C_\epsilon \Omega \rangle$, regardless of the actual number of factors present in the model. In particular, although the empirical eigenvalues appear to decay rapidly at the rate $k^{-2}$, any inference on the number of factors based on these eigenvalues would be misleading.

Finally, recall that in the factor-modeling literature the term strong factors typically refers to the regime $|C_\epsilon \Omega| \sim p$. However, even under this assumption, the leading empirical eigenvalue $\hat{\lambda}_k$ remains of order $T^2$, which is substantially larger than the strength of the factors. This indicates that the non-stationarity in the model dominates the population covariance structure, even in the presence of strong factors.

Part~\eqref{theorem - ev/ev} of the theorem provides the asymptotic limits of the proportion of variance explained by each sample eigenvalue. These limits are deterministic and independent of the true factor structure, represented by $C_\epsilon$ and $\Omega$. Consequently, the scree plot carries no useful information about factor strength, and any inference based on it would be unreliable.
\end{remark}

\begin{remark}[Spurious Eigenfunctions]
Part~\eqref{theorem - evec} of the theorem states that the estimated eigenvectors $\hat u_k$, which are used as factor estimates in certain models \citep[see][]{TavakoliNisolHallin2023, LengLiShangEtAl2024b}, converge in probability to a deterministic trigonometric function $d_k$. Importantly, this limit depends only on the index $k$ and is independent of the population covariance structure of the model. Consequently, factors estimated in this manner are spurious, as they bear no relation to the true factors $\epsilon$ or to the factor loadings $\Psi$.

The appearance of trigonometric functions in the limit is not surprising in this context. These functions coincide with the eigenfunctions arising in the Karhunen–Loève expansion of a demeaned standard Wiener process, which itself appears as the limit of a suitably scaled random walk. As discussed above, in the spurious regime the covariance structure is dominated by the non-stationarity of the model rather than by the cross-sectional dependence. As a result, the estimated factors effectively recover the eigenfunctions of the Wiener process instead of the true latent factors.
\end{remark}

\subsection{Separable Effects from Factors and Factor Loadings}\label{section - remarks on assumption 2} 

Assumption~\ref{assumptions - Omega rank} provides a sufficient condition for the emergence of the spurious limit, in a similar sense to that considered in the preceding works of \cite{OnatskiWang2021} and \cite{HeZhang2024}. The key quantity is the effective rank $\langle \cC \rangle / ||\cC||_2$, defined as the ratio between the trace and the Hilbert--Schmidt norm of the operator $\cC := C_\epsilon^{1/2}\Omega C_\epsilon^{1/2}$. This quantity reflects the structures of both common factors and factor loadings in the model.

Under some regular assumptions, we further study concrete bounds for the effective rank $\mathcal{R}$. This, in turn, enables us to formulate more interpretable conditions for Theorem~\ref{theorem - main}. 

In details, by Mercer's theorem, the operator $C_\epsilon$ has the spectral decomposition given by
	\begin{align*}
		C_\epsilon = \sum_{n=1}^\infty c_n \phi_n\otimes \phi_n
	\numberthis\label{equation - C special case in theorem 2}
	\end{align*}
where $(\phi_k)_k$ is a complete orthonormal basis for $\cH$ and $(c_n)_n$ is a non-negative sequence in $\ell^1$. 
Suppose that the factor loading operators $(\Psi_{ik})$ are given by
\begin{align*}
	\Psi_{ik} = \sum_{n=1}^{\infty} a_{nik} \phi_n \otimes \phi_n
	\numberthis\label{equation - Psi_ik special case in theorem 2}
\end{align*}
for some collection of real numbers $(a_{nik})_{nik}$ satisfying $\sup_n \sum_{i,k} a_{nik}^2<\infty$.
For $n\in\mathbb N$, define the matrices $A_n= (a_{nik} )_{ik}$ and $B_n= A_nA_n'$. Essentially, $c_n B_n$ is the covariance matrix of the factor part of the model $\Psi \epsilon_t$ projected component-wise to the one dimensional closed linear subspace of $\cH$ spanned by the function $\phi_n$. 

Under this specific structure (\ref{equation - C special case in theorem 2}) and (\ref{equation - Psi_ik special case in theorem 2}) for common factors and factor loadings, respectively, we can formulate the following two-sided estimates on the effective rank $\mathcal{R}:=\<\cC\>/\norm{\cC}_2$.


\begin{theorem}[Separable Bounds on Effective Rank]\label{theorem - upper lower bounds}
Let $C_\epsilon$ and $\Psi$ be given as in \eqref{equation - C special case in theorem 2} and \eqref{equation - Psi_ik special case in theorem 2}. 
Suppose that the matrices $(B_n, n\in\mathbb N)$ satisfy
$\norm{B_n}_2 \lesssim \norm{B_m}_2$ uniformly in $n,m\in\mathbb N$. 
	Then we have the upper bound
	\begin{align*}
		\mathcal{R}:=\frac{\<\cC\>\ }{\norm{\cC}_2}
		\lesssim 
		\frac{1}{\norm{C_\epsilon}_2}
		\sup_{n\in\mathbb N} \frac{\<B_n\>\ }{\norm{B_n}_2}
		\le
		\frac{1}{\norm{C_\epsilon}_2}
		\sup_{n\in\mathbb N} \norm{B_n}_0^{1/2},
		\numberthis\label{equation - upper bound on ratio in theorem 2}
	\end{align*}
	and the lower bound
	\begin{align*}
		\mathcal{R}:=\frac{\<\cC\>\ }{\norm{\cC}_2}
		\gtrsim
		\frac{1}{\norm{C_\epsilon}_2} 
		\left(  1+ \sup_n c_n \frac{\<B_n\>\ }{\norm{B_n}_2}  \right)
		\gtrsim
		\frac{1}{\norm{C_\epsilon}_2} 
		\left(  1+ \sup_n c_n \alpha(B_n)^{1/2} \norm{B_n}_0^{1/2} \right),
		\numberthis\label{equation - lower bound on ratio in theorem 2}
	\end{align*}
	where $\alpha(B_n)$ denotes the ratio between the least non-zero eigenvalues and the largest eigenvalue of $B_n$. 
	The omitted constants in the above estimates are all uniform in $T$. 
\end{theorem}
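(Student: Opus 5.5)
The plan is to compute $\cC = C_\epsilon^{1/2}\Omega C_\epsilon^{1/2}$ explicitly in the common eigenbasis $(\phi_n)$. Once the trace and the Hilbert--Schmidt norm are written as weighted sums over $n$, both bounds reduce to elementary inequalities.

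\emph{Step 1: diagonalize $\Omega$ and $\cC$.} By \eqref{eq- def of Omega}, $\Omega_{kl} = \sum_{i=1}^p \Psi_{ik}^*\Psi_{il}$. Since every $\Psi_{ik}$ in \eqref{equation - Psi_ik special case in theorem 2} is diagonal in $(\phi_n)$ with real coefficients, this gives
$\Omega_{kl} = \sum_n (A_n'A_n)_{kl}\, \phi_n\otimes\phi_n$.
The operator $C_\epsilon^{1/2} = \sum_n c_n^{1/2}\phi_n\otimes\phi_n$ is diagonal in the same basis. Hence, with the product of Definition \ref{definition - C Omega},
\begin{align*}
\cC_{kl} = \sum_n c_n (A_n'A_n)_{kl}\,\phi_n\otimes\phi_n .
\end{align*}

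\emph{Step 2: trace and Hilbert--Schmidt norm.} By Proposition \ref{proposition - matrix of op}, parts \ref{proposition - matrix of op - trace} and \ref{proposition - matrix of op - HS},
\begin{align*}
\<\cC\> = \sum_n c_n \,\mathrm{tr}(A_n'A_n) = \sum_n c_n \<B_n\>,
\qquad
\norm{\cC}_2^2 = \sum_{k,l}\sum_n c_n^2 (A_n'A_n)_{kl}^2 = \sum_n c_n^2 \norm{B_n}_2^2 .
\end{align*}
For the last equality I use that $A_n'A_n$ and $B_n=A_nA_n'$ share their nonzero spectrum. The hypothesis $\norm{B_n}_2\lesssim\norm{B_m}_2$ means that all $\norm{B_n}_2$ are comparable to a single quantity $b$. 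Therefore $\norm{\cC}_2 \sim b\,(\sum_n c_n^2)^{1/2} = b\,\norm{C_\epsilon}_2$, and so
\begin{align*}
\mathcal R \sim \frac{\sum_n c_n\<B_n\>}{b\,\norm{C_\epsilon}_2}.
\end{align*}

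\emph{Step 3: the bounds.}

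For the upper bound, I bound each $\<B_n\>$ by $\norm{B_n}_2 \sup_m \<B_m\>/\norm{B_m}_2$ and use $\norm{B_n}_2\lesssim b$. The factor $\sum_n c_n=\<C_\epsilon\>$ that remains is a fixed constant, independent of $T$, since $C_\epsilon$ does not depend on $T$. This gives the first inequality in \eqref{equation - upper bound on ratio in theorem 2}. The second follows from Cauchy--Schwarz on the eigenvalues of $B_n$: $\<B_n\>\le \norm{B_n}_0^{1/2}\norm{B_n}_2$.

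For the lower bound, I keep only a single term $n$ in the numerator and use $b\lesssim\norm{B_n}_2$, which yields $\mathcal R\gtrsim c_n\<B_n\>/(\norm{B_n}_2\norm{C_\epsilon}_2)$ for every $n$. The additive ``$1$'' comes from $\mathcal R\ge1$, which holds because $\cC$ is nonnegative and so $\norm{\cC}_2\le\<\cC\>$. Combined with $\norm{C_\epsilon}_2\le\<C_\epsilon\>\lesssim1$, this gives $1\gtrsim 1/\norm{C_\epsilon}_2$ up to a constant, so the ``$1$'' can be placed under the same prefactor. For the last inequality, let $r=\norm{B_n}_0$. Then $\<B_n\>\ge r\lambda_{\min}^+$ and $\norm{B_n}_2\le r^{1/2}\lambda_{\max}$, hence $\<B_n\>/\norm{B_n}_2\ge \alpha(B_n)\,r^{1/2}$. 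Since $\alpha(B_n)\le1$, this is at least $\alpha(B_n)^{1/2}r^{1/2}$.

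The main point needing care is Step 1. One must check the identity for $\Omega$ as an operator matrix (the adjoints and the ordering in $\Psi(u,v)'\Psi(u,w)$) and justify exchanging the sums over $n$ with the trace and Hilbert--Schmidt computations. The latter is routine given $\sup_n\sum_{i,k}a_{nik}^2<\infty$ and $(c_n)\in\ell^1$. The remaining steps are elementary once the representation of $\mathcal R$ in Step 2 is in hand.
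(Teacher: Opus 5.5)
Your Steps 1--2 and the upper bound are correct and are essentially the paper's argument. The lower bound \eqref{equation - lower bound on ratio in theorem 2}, however, is not established, because two inequalities are used in the wrong direction.

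The first concerns the term $1/\norm{C_\epsilon}_2$. From $\norm{C_\epsilon}_2\le\<C_\epsilon\>\lesssim 1$ you get $1/\norm{C_\epsilon}_2\gtrsim 1$, not $1\gtrsim 1/\norm{C_\epsilon}_2$. So $\mathcal R\ge 1$ does not give $\mathcal R\gtrsim 1/\norm{C_\epsilon}_2$. This is the part of the theorem that matters most: in the delocalized regime $1/\norm{C_\epsilon}_2\to\infty$, and this term is what yields Condition (a) of Theorem~\ref{corollary}. Your single-term bound cannot replace it. For $c_n=q^{-1}1_{n\le q}$ with low-rank $B_n$ (Setting 2) it gives only $\mathcal R\gtrsim q^{-1/2}$, so your argument yields $\mathcal R\gtrsim 1$ overall, whereas in fact $\mathcal R\sim\sqrt q$. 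The repair is already in your Step 2. Keep the whole sum and use $\<B_n\>\ge\norm{B_n}_2\gtrsim b$ for every $n$. Then $\sum_n c_n\<B_n\>\gtrsim b\<C_\epsilon\>$, hence $\mathcal R\gtrsim\<C_\epsilon\>/\norm{C_\epsilon}_2$. This is exactly what the paper does, bounding by $\min_n\<B_n\>/\max_n\norm{B_n}_2$ and then using $\<B_n\>\ge\norm{B_n}_2$.

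The second concerns your final step. Your estimate gives $\<B_n\>/\norm{B_n}_2\ge\alpha(B_n)\norm{B_n}_0^{1/2}$. But $\alpha(B_n)\le 1$ means $\alpha(B_n)\le\alpha(B_n)^{1/2}$, so this is weaker than the claimed bound, not stronger. The loss comes from combining $\<B_n\>\ge r\lambda^+_{\min}$, which is tight only when all nonzero eigenvalues equal $\lambda^+_{\min}$, with $\norm{B_n}_2\le r^{1/2}\lambda_{\max}$, which is tight only when they all equal $\lambda_{\max}$. To get $\alpha^{1/2}$ you need a Kantorovich-type (P\'olya--Szeg\H{o}) inequality, $\norm{x}_1^2\ge\frac{4\alpha(x)}{(1+\alpha(x))^2}\norm{x}_0\norm{x}_2^2$, applied to the eigenvalue vector $x$ of $B_n$. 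This is Lemma~\ref{lemma - ratio l1/l2}, followed by $1+\alpha\le 2$.

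Finally, your reason that $\<C_\epsilon\>$ is a harmless constant, namely that ``$C_\epsilon$ does not depend on $T$'', is wrong in this paper. The delocalized regime (e.g.\ $c_n=(aT)^{-\delta}1_{n\le (aT)^\delta}$) lets $C_\epsilon$ vary with $T$ by definition. If $C_\epsilon$ were fixed, $1/\norm{C_\epsilon}_2$ would be a constant and the theorem would say nothing. Since $\mathcal R$ is invariant under $C_\epsilon\mapsto\lambda C_\epsilon$ while the stated bounds are not, what you actually need is the normalization $\<C_\epsilon\>=\sum_n c_n=1$, which the paper uses in its proof. With it, your upper bound and the corrected lower bound both go through.
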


\begin{remark}
Theorem~\ref{theorem - upper lower bounds} derives upper and lower bounds for the effective rank $\mathcal{R}$, allowing the contributions of the factor loadings, captured by the sequence of matrices $(B_n, n\in\mathbb{N})$, to be distinguished from those of the common factors, represented by the operator $C_\epsilon$.

The term $\<B_n\>^2/ \norm{B_n}_2^2$ is analogous to the ratio in Assumption A3 of \cite{OnatskiWang2021}. It is bounded above by $\operatorname{rank}(B_n)$ and below by $\operatorname{rank}(B_n)$ scaled by the eigenvalue gap $\alpha(B_n)$. Intuitively, this quantity measures the effective rank of the projected data onto the direction $\phi_n$.

The quantity $\norm{C_\epsilon}_2^{-1}$ can be interpreted as an effective rank of $C\epsilon$, reflecting the decay rate of the eigenvalues $\{c_n\}$. When the eigenvalues decay rapidly, this quantity is close to $1$, whereas slow decay leads it to approach the actual rank of $C_\epsilon$, which may be infinite. For example, if $c_n = q^{-1} 1_{n\le q}$ for some fixed $q>0$, then $\norm{C_\epsilon}_2^{-1}=\sqrt q$, equal to the square root of the rank of $C_\epsilon$. In contrast, if $c_n = 2^{-n}$ for all $n$, then $\norm{C_\epsilon}_2^{-1} = \sqrt 3$ even though the rank of $C_\epsilon$ is infinite.
\end{remark}

To facilitate the discussion, it is helpful to define the following two different regimes for $C_\epsilon$.
\begin{definition}[Localization and Delocalization]\label{definition - local/delocal}
	The operator $C_\epsilon$ is said to be ``localized'' if $\norm{C_\epsilon}_2$ is uniformly bounded away from zero as $T\to\infty$, and ``delocalized'' if $\norm{C_\epsilon}_2\to0$ as $T\to\infty$.
\end{definition}
\begin{remark}
The terminology we used here is partly taken from the random matrix literature, see for example
\citep{JohnstonePaul2018,rudelson2015delocalization}. The sample eigenvectors of large dimensional random matrices typically exhibit two types of asymptotic behaviours depending on the true eigenvalues and the presence or the lack of spikes. In the de-localized regime, the sample eigenvector tends to a limit where the entries are all of a similar size, much similar to our condition $\norm{C_\epsilon}_2\to 0$.

The delocalized regime is when the factors are generated by an increasing number of basis functions with comparable weights as $p\to\infty$. A simple yet very general example of this can be seen by setting $c_n:= (aT)^{-\delta} 1_{n\le (aT)^\delta}$ for some fixed $a>0$ and  $\delta\in (0,1]$ in the spectral decomposition of $C_\epsilon$. The localized regime refers to when the factors are generated by an arbitrary number of basis functions, but the weights $(c_n)$ rapidly decay to zero. In this setting $\norm{C_\epsilon}_2^{-1}$ is bounded even though the rank of $C_\epsilon$ can be infinite. 

As previously discussed in Section \ref{section - assumptions}, 
the sufficient condition for the main theorem of \cite{OnatskiWang2021} implies that the number of stochastic trends $K$ in the data must diverge as $T\to\infty$. As a consequence, their main theorem necessarily exclude models with a finite number of strong non-stationary factors.  
This observation is analogous to the upper bounds in \eqref{equation - upper bound on ratio in theorem 2} in our case. In particular, under the localized regime defined above, condition \eqref{equation - key condition} still implies $K\to\infty$. However, this is no longer the case under the delocalized regime.
Condition \eqref{equation - key condition} can in fact be satisfied by models with a finite number of strong factors, as long as those factors are generated by an increasing number of basis functions with comparable weights as $T\to\infty$. This is an important distinction since it implies that functional time series with few strong non-stationary factors can exhibit spurious behaviour as well. We will also illustrate this through simulations in Section \ref{section - simulation}. 
\end{remark}

For ease of referencing we summarize the above discussion into the following theorem.
\begin{theorem}\label{corollary}
	Under the assumptions of Theorem \ref{theorem - upper lower bounds}, the spurious condition in (\ref{equation - key condition}) holds if either of the following conditions is satisfied
    \begin{enumerate}
        \item[Condition (a).] 
    $C_\epsilon$ is delocalized;
    \item [Condition (b).]
    $c_n \<B_n\>/\norm{B_n}_2$ diverges for some $n$. 
    \end{enumerate}
\end{theorem}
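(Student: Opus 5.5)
The plan is to read Theorem \ref{corollary} off the lower bound \eqref{equation - lower bound on ratio in theorem 2} in Theorem \ref{theorem - upper lower bounds}. Both conditions are verified under the standing assumptions of that theorem: the spectral forms \eqref{equation - C special case in theorem 2} and \eqref{equation - Psi_ik special case in theorem 2}, and the uniform comparability $\norm{B_n}_2\lesssim\norm{B_m}_2$. Under these assumptions there is a constant $c>0$, independent of $T$, such that
\begin{align*}
\mathcal R \ \ge\ \frac{c}{\norm{C_\epsilon}_2}\Bigl(1+\sup_n c_n \frac{\<B_n\>}{\norm{B_n}_2}\Bigr).
\end{align*}
Since every term in the parenthesis is non-negative, the right-hand side is bounded below both by $c\,\norm{C_\epsilon}_2^{-1}$ and by $c\,\norm{C_\epsilon}_2^{-1}\, c_n\<B_n\>/\norm{B_n}_2$ for each fixed $n$.

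\emph{Condition (a).} If $C_\epsilon$ is delocalized, then by Definition \ref{definition - local/delocal} we have $\norm{C_\epsilon}_2\to 0$. Hence $\mathcal R\ge c\,\norm{C_\epsilon}_2^{-1}\to\infty$, which is exactly \eqref{equation - key condition}.

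\emph{Condition (b).} First observe that $\norm{C_\epsilon}_2\le\norm{C_\epsilon}_1=\<C_\epsilon\>=\sum_n c_n$. It is implicit in the setup (for instance through the fourth-moment normalisation in Assumption \ref{assumptions}) that the trace of $C_\epsilon$ stays bounded as $T\to\infty$. Then $\norm{C_\epsilon}_2^{-1}$ is bounded below by a positive constant, and
\begin{align*}
\mathcal R\gtrsim c_n\<B_n\>/\norm{B_n}_2 \to\infty
\end{align*}
whenever this quantity diverges for some $n$, where $n$ may depend on $T$. The supremum in the lower bound covers this case.

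The main point needing care is this normalisation. If $\<C_\epsilon\>$ were allowed to grow, the factor $\norm{C_\epsilon}_2^{-1}$ could shrink, and Condition (b) alone would not suffice. I would therefore state explicitly that $\<C_\epsilon\>\lesssim 1$, or rescale $\Omega$ so that this holds; the ratio $\mathcal R$ is invariant under scaling $C_\epsilon\mapsto aC_\epsilon$, which makes such a normalisation harmless. I would also check that the omitted constants in \eqref{equation - lower bound on ratio in theorem 2} are uniform in $T$, which Theorem \ref{theorem - upper lower bounds} asserts. With these in place, both implications follow directly from the displayed inequality.
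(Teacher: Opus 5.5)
Your proof is correct and takes the same route as the paper, which gets Theorem~\ref{corollary} by reading both conditions off the lower bound \eqref{equation - lower bound on ratio in theorem 2}. The normalisation you flag is the convention $\<C_\epsilon\>=\sum_n c_n=1$ that the paper's proof of Theorem~\ref{theorem - upper lower bounds} uses explicitly; it does not follow from the fourth-moment condition, which is unchanged under $\epsilon\mapsto a\epsilon$, and it must hold on both sides, because the constant term in that bound is really $\<C_\epsilon\>/\norm{C_\epsilon}_2$, so Condition (a) needs $\<C_\epsilon\>\gtrsim 1$ just as Condition (b) needs $\<C_\epsilon\>\lesssim 1$.
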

\begin{remark}
In fact, from \eqref{equation - lower bound on ratio in theorem 2} we can conclude that condition \eqref{equation - key condition} holds in the delocalized regime when all the $B_n$'s are comparable  in $\norm{\cdot}_2$, regardless of their actual rank. This is a significant departure from the conclusions in \cite{OnatskiWang2021} for the non-functional case, since it shows concretely that a model with single ($K=1$) integrated functional factor can indeed tend to the spurious limit. 

In the localized regime, a sufficient condition for \eqref{equation - key condition} is the divergence of the quantity $c_n \<B_n\>/\norm{B_n}$ for some $n$. 
By the second lower bound in \eqref{equation - lower bound on ratio in theorem 2}, this is satisfied whenever the rank of $B_n$ diverges as $T\to\infty$, and the quantities $c_n $ and  $\alpha(B_n)$ do not decay too fast. 
This concretely shows that a model with large effective rank in some direction $\phi_n$ with non-diminishing weights tends to the spurious limit, which is in line with the findings of \cite{OnatskiWang2021}.
\end{remark}

\section{Simulation Study}\label{section - simulation}
\subsection{Setup}
We start with a description of the data generating process. 
Recall from \eqref{equation - model} the model 
\begin{equation*}
X_{it}(u) = \sum_{s=1}^t\sum_{k=1}^K (\Psi_{ik} \epsilon_{ks})(u) + \zeta_{it}(u), 
\quad i=1,\ldots, p, \ t=1, \ldots, T,
\end{equation*}
where $(\epsilon_{kt})_{kt}$ are independent random functions on $\cH$ with mean zero and covariance operator $C_\epsilon$  and $(\zeta_{it})_{it}$ is a weakly stationary  functional  times series on $\cH^p$ with zero mean and covariance $C_\zeta$. For simplicity, we focus on the settings discussed in Section \ref{section - remarks on assumption 2} and Theorem \ref{theorem - upper lower bounds}.

Fix $q>0$ and let $(\phi_n)_{n=1}^q$ be a set of Fourier basis functions on $\cH$. 
We set 
\begin{align*}
    C_\epsilon = \sum_{n=1}^\infty c_n \phi_n\otimes \phi_n,
    \numberthis\label{equation - simulation/C_epsilon}
\end{align*}
where $c_n$ is a sequence of non-negative real numbers summing to one, to be specified below. Similar to \cite{LengLiShangEtAl2024b} and \cite{GuoQiaoWangEtAl2024}, we generate the factor process $\epsilon$ by setting
\begin{align*}
    \epsilon_{kt}(u) :=  \sum_{n=1}^q Z_{kt}^n \phi_n(u), 
\end{align*}
where $(Z_{kt}^n)$ is a collection of i.i.d. Gaussian variables with mean zero and variance $\Var(Z_{kt}^n) = c_n$ for all $k,t$. 
For the noise time series $\zeta$, we simply set 
\begin{align*}
    \zeta_{it}(u):= \sum_{n=1}^q W_{it}^n \phi_n(u),
\end{align*}
where $(W_{it}^n)$ is a collection of i.i.d. Gaussian variables with mean zero and variance equal to $\Var(W_{it}^n) = 2^{-n}$ for all $i,t$. 
Following \eqref{equation - Psi_ik special case in theorem 2},  we generate the factor loadings $\{\Psi_{ik}\}$ by setting
\begin{align*}
    \Psi_{ik}(u,v) = \sum_{n=1}^{q} 
        a_{ik}^n \phi_n(u) \phi_n(v),
        \numberthis\label{equation - simulation/A_n}
\end{align*}
where for each $n=1,\ldots, q$, the $p\times K$ matrix $A_n:= \{a_{ik}^n\}_{i,k}$ is to be specified below.

For the model parameters, we set $T=200$, $p=100$ and $q=20$.
Recall from  Assumption \ref{assumptions - Omega rank} and Theorem \ref{theorem - main} that the sample covariance structure tends to a spurious limit whenever the ratio
\begin{align*}
    \mathcal R = \frac{\<C_\epsilon^{1/2} \Omega C_\epsilon^{1/2}\>^2}{\norm{C_\epsilon^{1/2} \Omega C_\epsilon^{1/2}}_2^2}.
\end{align*}
diverges as $p\to\infty$. From Theorem \ref{theorem - upper lower bounds}, it is clear that whether $\mathcal R$ diverges is determined by the parameters $K$, the eigenvalues  $(c_n)_n$ in \eqref{equation - simulation/C_epsilon} and the matrices  $(A_n)_n$ in \eqref{equation - simulation/A_n}. We will perform simulations with various choices of these parameters described below. 

For the number of factors we set $K\in\{50,10,2\}$. 
In 
\cite{HeZhang2023} and \cite{OnatskiWang2021}, 
the $K=2$ setting represents as a model with genuine factors and is shown to be far from the spurious limit, while the $K=50$ setting represents a model with no genuine factors for which spurious behaviours occur. We will see that this is not necessarily the case with functional data. 

For the loading matrices $(A_n, n\in\mathbb N)$,  we will consider the high rank setting where
\begin{align}\label{high rank}
    &A_n \sim G_{p,K}
\end{align}
independently for all $n\in \mathbb N$, where $G_{p,K}$ denotes a $p\times K$ matrix with i.i.d. standard Gaussian random variables, and the low-effective rank setting where
\begin{align}\label{low rank}
    A_n \sim p^{1/2} U_{p}\cdot\mathrm{diag}&(\{2^{-n/2}\}_{n=1}^{K}, 0,\ldots, 0)\cdot U_{K}'
\end{align}
independently for all $n\in \mathbb N$,
with $U_N$ denoting an $N\times N$ random orthogonal matrix uniformly sampled from $\mathbb S^{N-1}$ for $N\in\{p,K\}$.
In the high-rank setting (\ref{high rank}), each $A_n$ has full column rank with high probability, while in the low-effective rank setting (\ref{low rank}), each $A_n$ has low effective rank with
\begin{align*}
    \frac{\<A_n A_n'\>^2}{\norm{A_n A_n'}_2^2}
    = 
    3 \frac{(1-2^{-K})^2}{1-4^{-K}} \approx 3 \ll K. 
\end{align*}
For the covariance operator $C_\epsilon$, we consider the following three scenarios by setting
\begin{align}
 \mathrm{Delocalized}: \ \    c_n  & =   q^{-1} 1_{n\le q},\label{c1}
    \\
 \mathrm{Localized}: \ \    c_n & =  2^{-n}1_{n\le q},\label{c2}
    \\
 \mathrm{Localized}: \ \    c_n & = 2^{-1} 1_{n\le 2}.\label{c3}
\end{align} 
In the first setting (\ref{c1}), each basis function $\phi_n$ has equal weight and we have $\norm{C_\epsilon}_2^2 = q^{-1} = 0.05$ which is very close to zero. This represent the scenario where the operator $C_\epsilon$ is delocalized as defined in Definition \ref{definition - local/delocal}. 
In the second setting (\ref{c2}), the eigenvalues rapidly decay to zero and we have $\norm{C_\epsilon}^2_2 = \frac{1- 4^{-q}}{3}\approx 1/3$. This represents the situation where $C_\epsilon$ is localized. Finally, a more extreme version of this is found in the last setting (\ref{c3}), where $C_\epsilon$ is completely localized on two basis functionsand we have $\norm{C_\epsilon}^2_2 = 0.5$. 

In the simulations, we consider combinations of the two settings for $A$ and the three settings for $C_\epsilon$ described above. Table \ref{table1} summarizes these  combinations of choices for $A_n$ and $C_\epsilon$. 
We also list the asymptotic orders of the effective ranks of each setting, which are obtained from Theorem 2.


\begin{table}[!htb]
\centering
\setlength{\tabcolsep}{0.12in}
\caption{Summary of Simulation Settings}
\begin{tabular}{@{}llccccll@{}}
\toprule
& & $c_n$ & $\norm{C_\epsilon}_2^2$ & $A_n$ &
$\<B_n\>^2/\norm{B_n}_2^2$ & $\mathcal R$ \\
\midrule
\multirow{2}{*}{Localized}
& Setting 1  
& $q^{-1}\mathbf{1}_{n \le q}$ 
& $q^{-1}$ 
& full rank
& $\sim K$ 
& $\sim \sqrt{qK}$
\\
& Setting 2  
& $q^{-1}\mathbf{1}_{n \le q}$ 
& $q^{-1}$ 
& low rank
& $\approx 3$ 
& $\sim \sqrt{q}$
\\
\midrule
\multirow{4}{*}{Delocalized}
& Setting 3 
& $2^{-n}\mathbf{1}_{n \le q}$ 
& $\frac{1}{3}(1-4^{-q})$ 
& full rank
& $\sim K$
& $\sim \sqrt{K}$
\\
& Setting 4  
& $2^{-n}\mathbf{1}_{n \le q}$ 
& $\frac{1}{3}(1-4^{-q})$ 
& low rank
& $\approx 3$  
& $\sim 1$
\\
& Setting 5   
& $2^{-1}\mathbf{1}_{n \le 2}$ 
& $0.5$ 
& full rank
& $\sim K$ 
& $\sim \sqrt{K}$
\\
& Setting 6  
& $2^{-1}\mathbf{1}_{n \le 2}$ 
& $0.5$ 
& low rank
& $\approx 3$
& $\sim 1$
\\
\bottomrule
\end{tabular}
\label{table1}
\end{table}
For each setting in Table~\ref{table1}, we simulate $X$ from the above data generating process with  $K\in\{50,10,2\}$ and compute the leading eigenvalues and eigenvectors of the sample Gram matrix defined in~\eqref{equation - gram matrix}. 
We plot the top five sample eigenvectors for $K\in\{50,10,2\}$ (red, green and blue lines respectively) against the spurious limit (black line) as shown in Theorem \ref{theorem - main}. We also plot the proportion of variance explained by each eigenvalue with the same color coding. 

\subsection{Results}

We start with the setting that shows obvious spurious behaviors.
Figure \ref{figure - delocal iid} plots the sample eigenvectors of Setting 1 where $C_\epsilon$ is in the delocalized regime and $A_n$ is of high rank. 
It can be seen that all sample eigenvectors as well as the sample eigenvalues closely resemble the theoretical spurious limits. 
\begin{figure}[!htb]
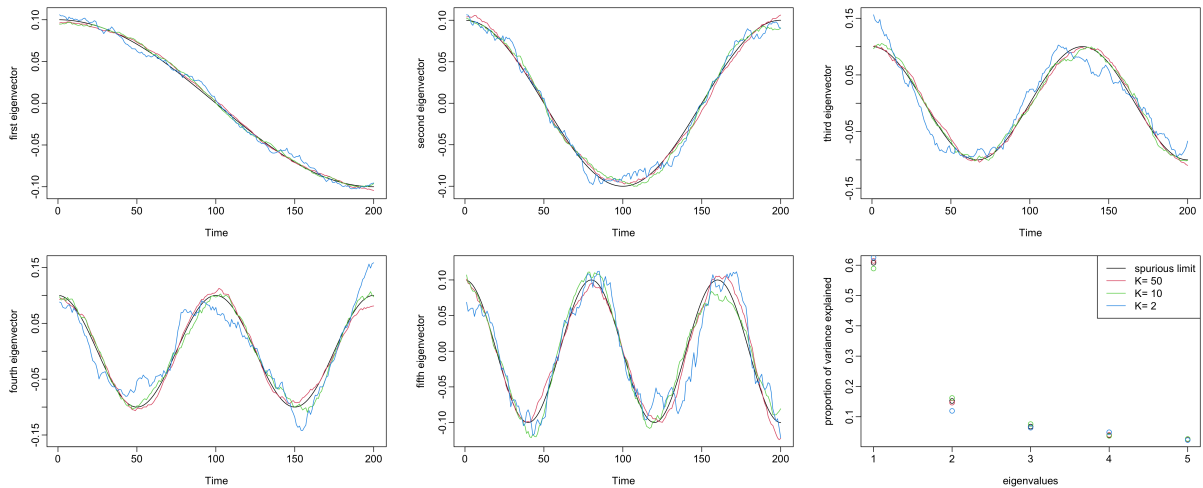

    \centering
    \subf{deloc_iid_1.png}{.32}
    \subf{deloc_iid_2.png}{.32}
    \subf{deloc_iid_3.png}{.32}
    \subf{deloc_iid_4.png}{.32}
    \subf{deloc_iid_5.png}{.32}
    \subf{deloc_iid_6.png}{.32}
     \caption{Setting 1 - $C_\epsilon$ delocalized,  loading matrices are of full-column rank}\label{figure - delocal iid}
\end{figure}

This behaviour is in agreement with our theoretical results, since Theorem \ref{corollary} states that spurious limits must appear whenever $C_\epsilon$ is delocalized. 
The results under Setting 2 (i.e. $C_\epsilon$ delocalized with $A_n$ having low-effective rank) are visually very similar to Figure \ref{figure - delocal iid} under Setting 1 and are hence omitted.  
Setting 2 is designed to show that these limits may arise even in the presence of only $K=2$ factors and when $A_n$ possesses low-effective rank, representing a clear departure from the findings of \cite{HeZhang2024} and \cite{OnatskiWang2021}. This underscores a key difference between HDFTS and HDTS, namely that the proliferation of basis functions in functional representations can itself introduce an additional source of dimensionality.

Figure \ref{figure - local iid} plots the results for Setting 3 where the eigenvalues of $C_\epsilon$ are fast decaying and the rank of each $A_n$ is equal to $K$. 
As can be seen from Figure \ref{figure - local iid}, the sample eigenstructure in Setting 3 closely resembles the spurious limit for $K=50$, however, as $K$ decreases, visible deviations from the spurious limit can be observed. 
This is to be expected, since for Setting 3, the first upper bound in \eqref{equation - upper bound on ratio in theorem 2} is bounded from above by a multiple of $K$. By Theorem \ref{theorem - upper lower bounds} this implies that condition \eqref{equation - key condition} is not satisfied when $K$ is finite. 
The results for this setting is in line with the findings of \cite{OnatskiWang2021} and \cite{HeZhang2024}, where a small value of $K$ represents the presence of genuine factors. 
\begin{figure}[!htb]
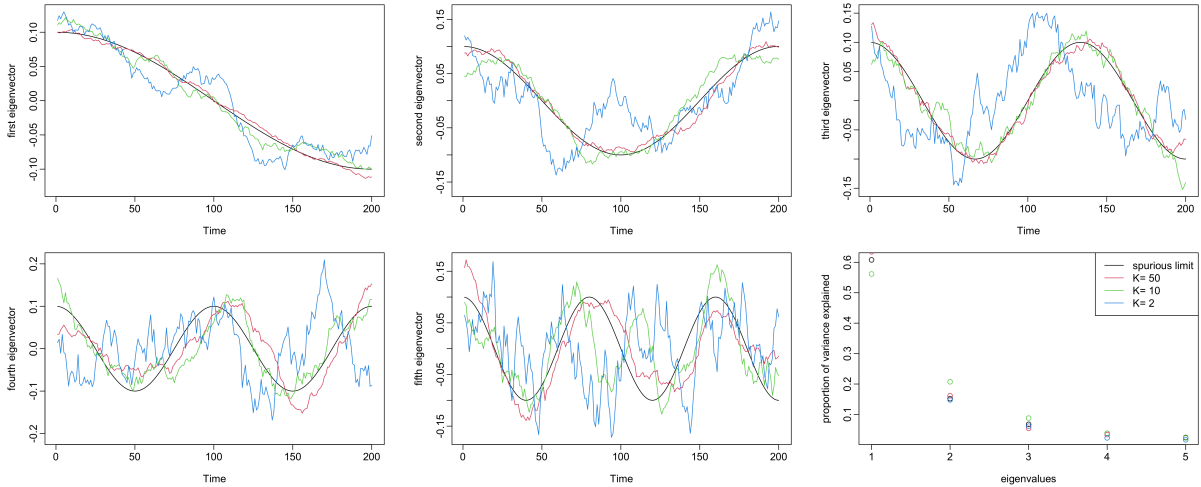

    \centering
    \subf{loc_iid_1.png}{.32}
    \subf{loc_iid_2.png}{.32}
    \subf{loc_iid_3.png}{.32}
    \subf{loc_iid_4.png}{.32}
    \subf{loc_iid_5.png}{.32}
    \subf{loc_iid_6.png}{.32}
    \caption{Setting 3 - $C_\epsilon$ localized,  loading matrices are of full-column rank}\label{figure - local iid}
\end{figure}

Figure \ref{figure - local loweffrank} plots the results for Setting 4 where $C_\epsilon$ has decaying eigenvalues and $A_n$ has low-effective rank. It can be seen that the sample eigenstructure begins to visibly deviate from the spurious limit, 
regardless of the value of $K$. Similar to Setting 3, this can be explained by the upper bounds in \eqref{equation - upper bound on ratio in theorem 2} and Theorem \ref{theorem - upper lower bounds}. Note that in particular, having divergent $K$ does not seem to guarantee the presence of the spurious limit. 
\begin{figure}[!htb]
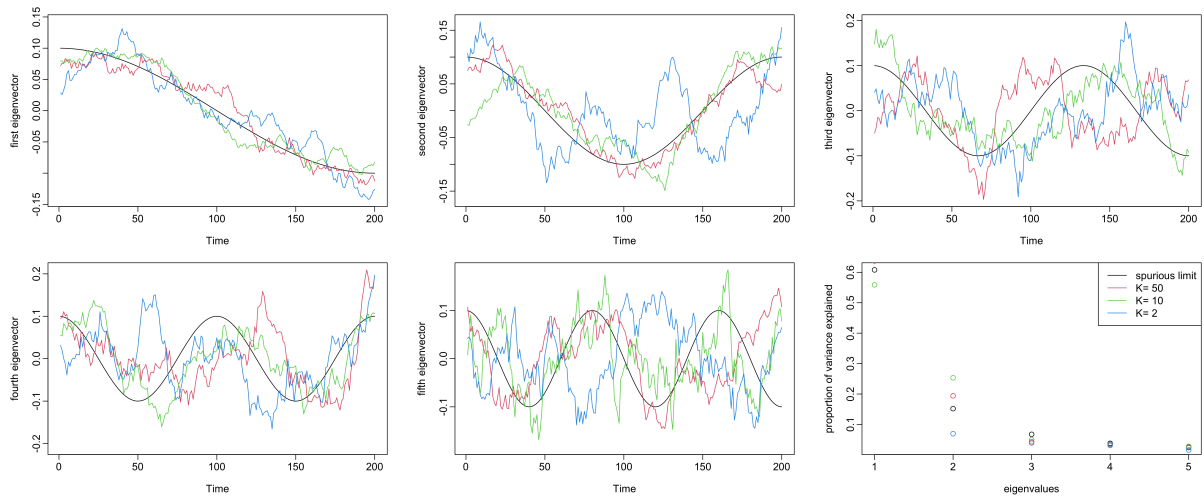

    \centering
    \subf{loc_loweffrank_1.png}{.32}
    \subf{loc_loweffrank_2.png}{.32}
    \subf{loc_loweffrank_3.png}{.32}
    \subf{loc_loweffrank_4.png}{.32}
    \subf{loc_loweffrank_5.png}{.32}
    \subf{loc_loweffrank_6.png}{.32}
    \caption{Setting 4 - $C_\epsilon$ localized, loading matrices are of low-effective rank}\label{figure - local loweffrank}
\end{figure}
 
Figures~\ref{figure - very_local iid} and~\ref{figure - very_local loweffrank} show the results for Settings 5 and 6, where $C_\epsilon$ is completely localized on two basis functions, and $A_n$ has full rank and low-effective rank respectively.

\begin{figure}[!htb]
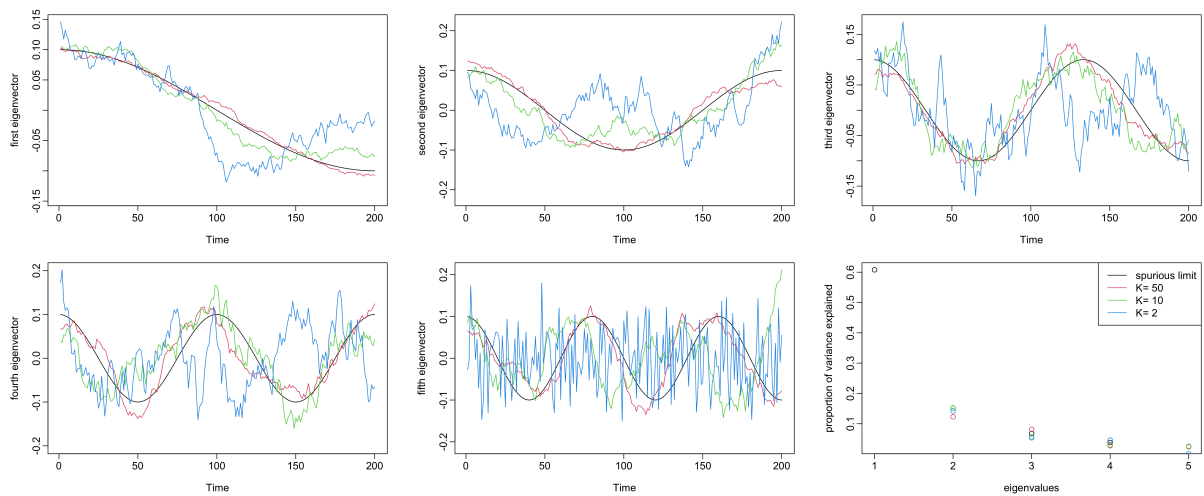

    \centering
    \subf{very_loc_iid_1.png}{.32}
    \subf{very_loc_iid_2.png}{.32}
    \subf{very_loc_iid_3.png}{.32}
    \subf{very_loc_iid_4.png}{.32}
    \subf{very_loc_iid_5.png}{.32}
    \subf{very_loc_iid_6.png}{.32}
    \caption{Setting 5 - $C_\epsilon$ is of rank 2, loading matrices are of full-column rank}\label{figure - very_local iid}
\end{figure}

\begin{figure}[!htb]
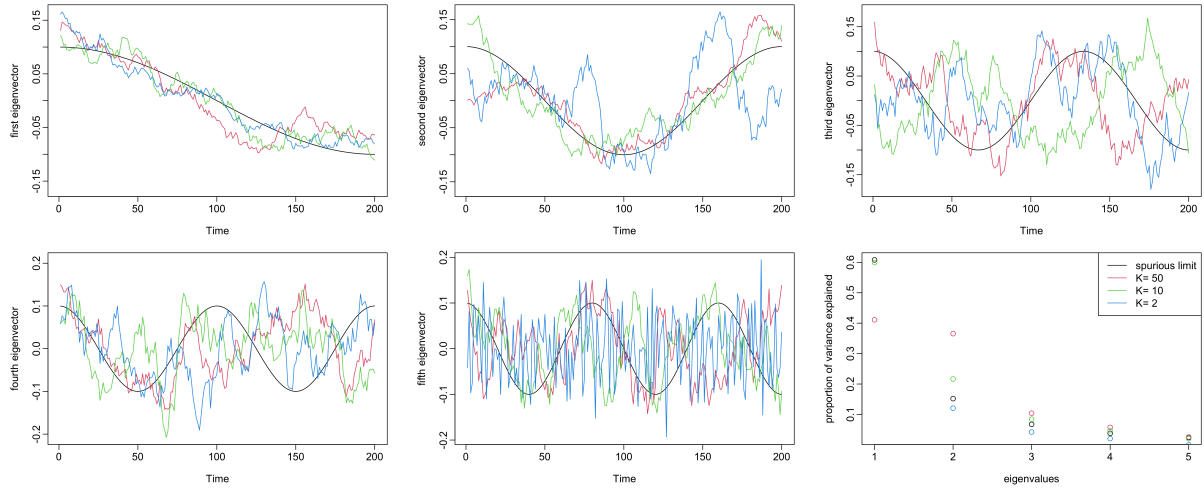

    \centering
    \subf{very_loc_loweffrank_1.png}{.32}
    \subf{very_loc_loweffrank_2.png}{.32}
    \subf{very_loc_loweffrank_3.png}{.32}
    \subf{very_loc_loweffrank_4.png}{.32}
    \subf{very_loc_loweffrank_5.png}{.32}
    \subf{very_loc_loweffrank_6.png}{.32}
    \caption{Setting 6 - $C_\epsilon$ is of rank 2, loading matrices are of low-effective rank}\label{figure - very_local loweffrank}
\end{figure}

In Figure~\ref{figure - very_local iid}, the case $K=2$ significantly deviates from the spurious limit. This effect is similar to Setting 3 but is visibly more pronounced, since the effective rank of $C_\epsilon$, which is an important quantity due to Theorem \ref{theorem - upper lower bounds},  is even smaller in Setting 5. 
In Setting 6, the sample eigen-structures all deviate from the spurious limit regardless of the value $K$, similar to Setting 4. The deviations are more visibly pronounced compared to Setting 4 due to the decrease in the rank of $A_n$.

\section{Empirical applications}\label{section - empirical}
We consider two age-specific mortality rate datasets as an empirical illustration of our results. Our first data set contains age-specific and gender-specific mortality rates obtained from the Human Mortality Database. The dataset consists of mortality rates for $p=32$ countries observed from 1960 to 2013 ($T=54$). 
Since exposures and death counts are sparse at high ages, 
we aggregate data with age over 100 into a single observation. The raw data is transformed into functional data via smoothed using the \verb|demography| package in \verb|R| and the logarithm of the mortality rates are computed. 
This dataset was considered in \cite{TangShangYang2022} where a novel clustering algorithm is proposed based on functional principal component analysis. 

From the smoothed log mortality rates, we computed the sample covariance matrix as defined in~\eqref{equation - gram matrix} and computed its eigenvalues and eigenvectors. Figure~\ref{figure - multi} plots the first four eigenvectors against the spurious limits as found in Theorem \ref{theorem - main}. 
\begin{figure}[!htb]
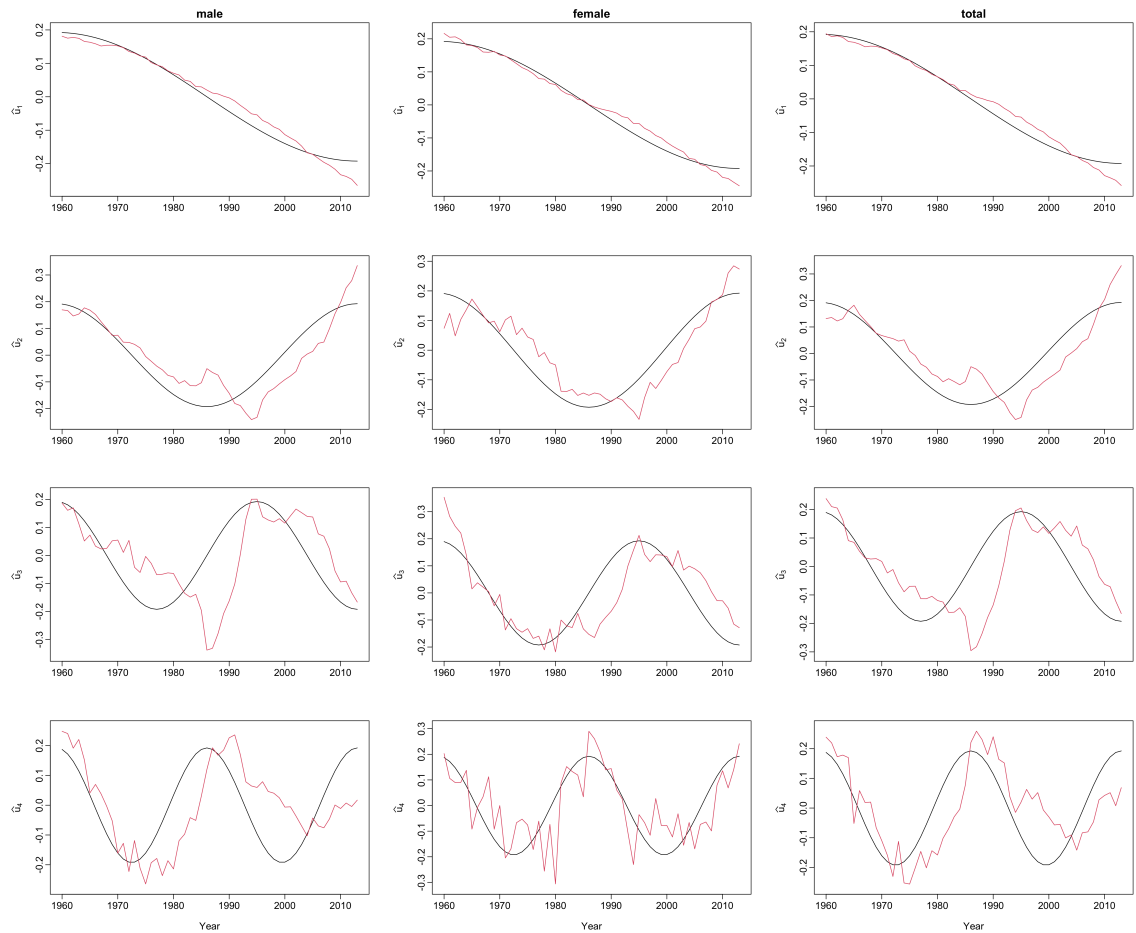

    \centering
    \subf{mult_male_1}{.30}
    \subf{mult_female_1}{.30}
    \subf{mult_total_1}{.30}

    \subf{mult_male_2}{.30}
    \subf{mult_female_2}{.30}
    \subf{mult_total_2}{.30}

    \subf{mult_male_3}{.30}
    \subf{mult_female_3}{.30}
    \subf{mult_total_3}{.30}

    \subf{mult_male_4}{.30}
    \subf{mult_female_4}{.30}
    \subf{mult_total_4}{.30}
    \caption{Sample eigenvectors of the covariance matrix of the male, female and total age-specific mortality rates of 32 countries from 1960 to 2013}
    \label{figure - multi}
\end{figure}
As can be seen, the sample eigenvectors closely resemble the trigonometric curves in the spurious limit. 
This suggests that the mortality rates are likely non-stationary and that the number of non-stationary components in the data, as measured by the effective rank $\mathcal{R}\sim\sqrt{qK}$ of the model, may be large. A large value of $\mathcal{R}$ implies three possible scenarios: Case 1 (large $q$ and small $K$): a large number of common factors ($K$) are shared across locations; Case 2 (large $K$ and small $q$): a large number of common factors ($q$) contribute to the functional variation; and Case 3 (moderately large $q$ and $K$): both the cross-location and functional dimension are driven by a relatively large number of common factors. Case 1 is consistent with findings in the mortality literature \citep{GaoShangYang2019, ZhangGaoPanYang2025, GuoQiaoWangEtAl2024}, which suggest that only a small number of common factors exist across locations. Cases 2 and 3 represent new findings for multi-location mortality data.
This raises concerns about the validity of principal component analysis based methods when analyzing this dataset without differencing. 

For comparison, we also consider the Japanese sub-national mortality rates observed for~$p=47$ prefectures over 1975-2022~($T = 48$). Since the mortality rates are recorded from different prefectures in the same country, it is reasonable to expect a higher level of homogeneity within the across the $47$ covariates. If the cross-sectional dependence (factor strength) is strong enough, i.e. if the model has low effective rank in the sense of condition~(\ref{equation - key condition}), it is possible to observe genuine factors present in the model instead of the spurious limit.

\begin{figure}[!htb]
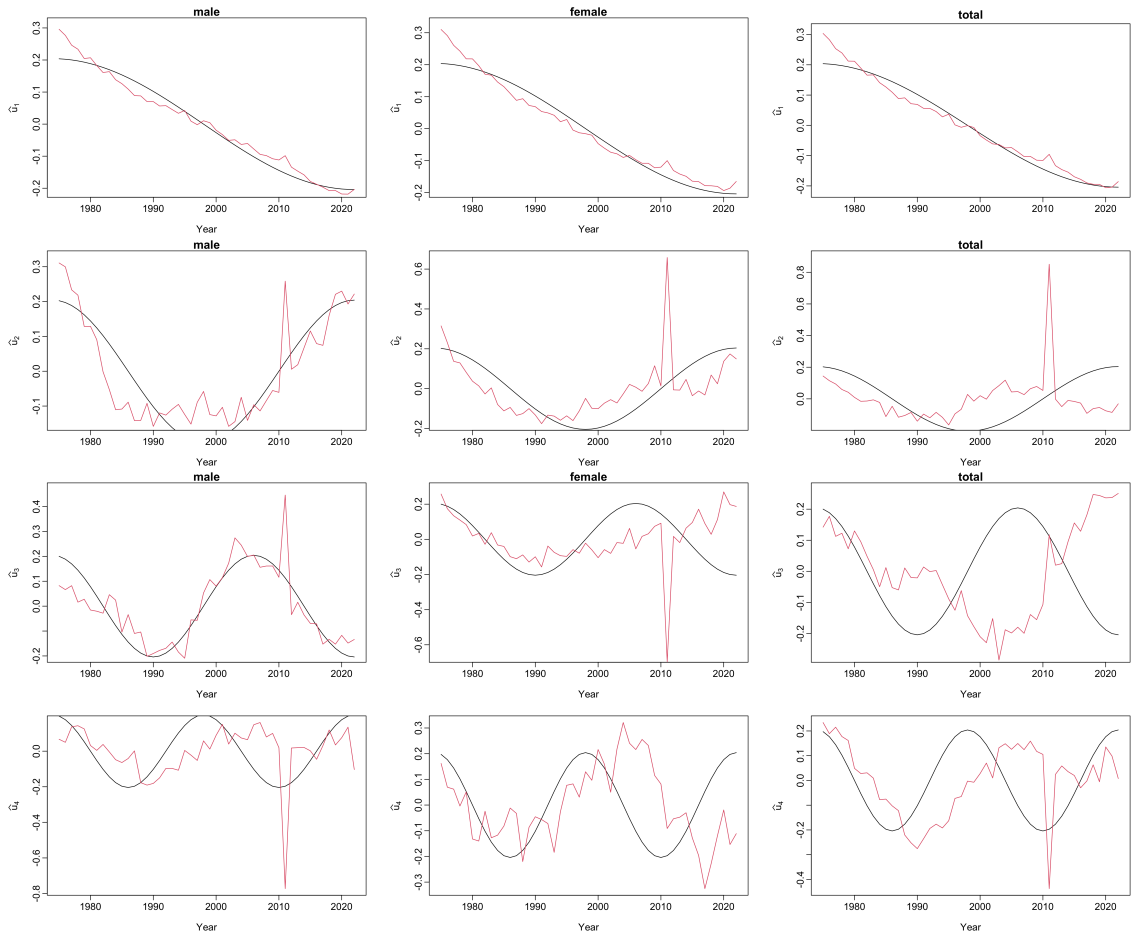

\centering
    \subf{jpn_male_1}{.30}
    \subf{jpn_female_1}{.30}
    \subf{jpn_total_1}{.30}

    \subf{jpn_male_2}{.30}
    \subf{jpn_female_2}{.30}
    \subf{jpn_total_2}{.30}

    \subf{jpn_male_3}{.30}
    \subf{jpn_female_3}{.30}
    \subf{jpn_total_3}{.30}

    \subf{jpn_male_4}{.30}
    \subf{jpn_female_4}{.30}
    \subf{jpn_total_4}{.30}

    \caption{Sample eigenvectors of the covariance matrix of the male, female and total mortality rates of 47 Japanese prefectures from 1975 to 2022}
    \label{figure - jpn}
\end{figure}

Similar to the multi-country data set mentioned above, we aggregate ages above 100 before smoothing and taking the logarithm. Figure~\ref{figure - jpn} shows the sample eigenvectors of the covariance matrix defined in~\eqref{equation - gram matrix} plotted against the spurious limits of Theorem~\ref{theorem - main}. 
The eigenvectors still resemble the spurious limits, but the effects are not as visually pronounced as in the multi-country dataset. The second and third sample eigenvectors show significant deviation from the trigonometric limits at the year 2011. A possible explanation is the Tohoku Earthquake and Tsunami that took place in March of 2011 which results in nearly 20,000 deaths \citep{NakaharaIchikawa2013}. In this case, the eigenvectors could be thought of as genuine factors that contains relevant information on the dataset and instead of being labeled as spurious. Nevertheless, the overall resemblance between the sample eigenvectors and the trigonometric limits suggests that additional care should be taken when dealing with the non-stationary nature of the dataset. 

\section{Conclusion and Future Work}\label{section - conclusion}

We study the asymptotic spectral behaviour of high-dimensional functional factor models with non-stationary factors. Under a rank-type condition on the covariance operator, we show that conventional functional principal component analysis can produce spurious results, extending \cite{OnatskiWang2021} to the functional setting. In particular, the sample eigenvalues and eigenfunctions converge to limits unrelated to the true covariance structure, so that leading components may explain a large proportion of the variation even in the absence of genuine factors.
Moreover, due to the intrinsic high dimensionality of functional data, such spurious behaviour may arise even with a single non-stationary factor, in contrast to \cite{OnatskiWang2021} and \cite{HeZhang2024}. Simulation studies illustrate the theory and identify regimes in which spurious behaviour occurs. An empirical analysis of two multivariate age-specific mortality datasets further demonstrates the phenomenon under standard functional PCA.

Our simulations reveal asymptotic features of the sample eigenvectors that are not explained by existing theory. Formalizing these observations and establishing the associated conjecture is left for future work, which would provide practical tools for detecting spurious behaviour and for identifying key properties of common factors, including their effective dimensionality.
Another direction is to study the asymptotic distribution of the sample eigenvalues, in the spirit of \cite{HeZhang2024}. Such results would enable formal inference for testing the presence of high-rank components and assessing the emergence of spurious behaviour in functional factor models.

\newpage

\section*{Supplementary Material to ``Attribution of Spurious Factors from High-Dimensional Functional Time Series''}

This material consists of two parts: the first part provides proofs of theoretical results in the main paper; and the second part proposes a new conjecture on empirical eigenvectors.

\appendix
\appendixone
\section{Proofs}\label{section - proofs}
\subsection{Preliminary results}
We first give the proof to Proposition \ref{proposition - matrix of op}. 
\begin{proof}[of Proposition \ref{proposition - matrix of op}]
	For $f = (f_i)$ and $g = (g_i)$ in $\cH^K$, we have
		\begin{align*}
			\<\Omega f,g\>
			& =  \sum_{i} \< \sum_j \Omega_{ij} f_j, g_i\>
			=\sum_{j} \< f_j,  \sum_i\Omega_{ij}^* g_i\>
			=\<f, (\Omega_{ji}^*)_{ij} g\>.
		\end{align*}
		Similarly, we have
		\begin{align*}
			\<C_1 \Omega C_2 f,g\>
			& = 
			\sum_i \<C_1 \sum_j \Omega_{ij} C_2 f_j, g_i\>
			=
			\sum_j \<  f_j,  \sum_i C_2^* \Omega_{ij}^* C_1^* g_i\>
		\end{align*}
		which shows $(C_1 \Omega C_2)^* = C_2^* \Omega^* C^*$. 

		Let $(e_n)_n$ be an orthonormal basis for $\cH$. Let $(\tilde e_n)_n$ be a set of vectors in $\cH^K$ given by $\tilde e_{(n-1)K+m}:= (0, \ldots, e_{n},  \ldots,0)$ where $n=1,\ldots$,  $m=1,\ldots, K$ and $e_n$ appears in the $m$-th co-ordinate. Clearly $(\tilde e_n)$ is an orthonormal basis of $\cH^K$. Then we have
		\begin{align*}
			\sum_n\<C_1\Omega C_2 \td e_n,\td e_n\>
			& = \sum_n  \sum_{ij}\<C_1 \Omega_{ij} C_2 \td e_n^{j},\td e_n^{i}\>,
		\end{align*}
		where $\td e_n^i$ denotes the $i$\textsuperscript{th} co-ordinate of $\td e_n$. By definition, at least one of of $\td e_n^i$ and $\td e_n^j$ is 0 whenever $i\ne j$. Therefore
		\begin{align*}
			\sum_n\<C_1\Omega C_2 \td e_n,\td e_n\>=
			\sum_i \sum_n \<C_1\Omega_{ii}C_2 e_n, e_n\> = \sum_i \<C_1\Omega_{ii}C_2\>,
		\end{align*}
		which gives the trace of $C \Omega$. 
		Finally, we have
		\begin{align*}
		 	\norm{C_1 \Omega C_2}_F^2=\langle C_1 \Omega C_2C_2^* \Omega^* C_1^* \rangle
		 	& = 
		 	\sum_i \<C_1 (\Omega C_2 C_2^* \Omega^*)_{ii} C_1^*\>
		 	,
		 \end{align*} 
		where the last equality follows from the second claim of this Proposition. 
		Since $(\Omega^*)_{ji} = \Omega^*_{ij}$ by part 1 of this Proposition, we have
		\begin{align*}
			\norm{C_1 \Omega C_2}_F^2
			=
		 	\sum_{ij} \<C_1 \Omega_{ij} C_2 C_2^* (\Omega^*)_{ji} C_1^*\>
			=\sum_{ij} \<C_1 \Omega_{ij} C_2 C_2^* \Omega^*_{ij} C_1^*\>
			=\sum_{ij} \norm{C_1 \Omega_{ij} C_2}_F^2
		\end{align*}
		where $\norm{C_1 \Omega_{ij} C_2}_F$ denotes the Hilbert-Schmidt norm of $C_1 \Omega_{ij} C_2$ on $\cH$. 
\end{proof}
The following lemma is the key technical result underpinning  Theorem \ref{theorem - main}. It computes moments of certain bilinear forms associated with the random matrix $W$ appearing in \eqref{eq- def of W}. 
\begin{lemma}\label{lemma - moments of a'Wb}
	Let $W$ be the $T\times T$ random matrix defined in \eqref{eq- def of W}. Under Assumption \ref{assumptions - epsilon iid}, 
	for any deterministic vectors $a, b\in \Rm^T$, we have
	\begin{align*}
		\Em[a'W b ] = \<a,b\> \langle C_\epsilon \Omega \rangle,
		\numberthis\label{equation - lemma bilinear form, E[aWb]}
	\end{align*}
	Furthermore, for any deterministic vectors $a,b,c,d \in\Rm^T$, we have
	\begin{align*}
		\Big|\Cov(a'Wb, c'Wd)
		-  \<a,c\>\<b,d\>\norm{C_\epsilon^{1/2} \Omega C_\epsilon^{1/2}}_F^2 - &\<a,d\>\<b,c\> \norm{C_\epsilon^{1/2} \Omega C_\epsilon^{1/2}}_F^2 \Big|
		  \\
		& \le 2 \kappa_4^*
		\sum_t a_t b_t c_t d_t
		\norm{C_\epsilon^{1/2} \Omega C_\epsilon^{1/2}}_F^2,
	\end{align*}
	where $a_t, b_t, c_t$ and $d_t$ are the $t$-th coordinate of $a,b,c$ and $d$ respectively.

	\begin{proof}
	From the definition of $W$ we may write
	\begin{align*}
		a'Wb& = 
		\sum_{i,j=1}^K \iint
		\<a, \epsilon_{i\cdot}(v)\>  \<b, \epsilon_{j\cdot}(v)\> \Omega_{ij}(v, w) dv dw,
	\end{align*}
	where $\epsilon_{i\cdot}(v)\in\Rm^T$ is the $i$-th column of the matrix $\epsilon'(v)$. 
	Since $(\epsilon_{it})_{it}$ is a matrix of uncorrelated elements by assumptions, we have
	\begin{align*}
		\Em[a'Wb]
		& =  \sum_{i,j=1}^K \iint \sum_{s,t=1}^T   \Em[a_s \epsilon_{is}(v) b_t \epsilon_{jt}(w)] \Omega_{ij}(v,w) dv dw
		\\
		& =  \sum_{t=1}^T  a_t b_t \sum_{i=1}^K\iint  \Em[\epsilon_{it}(v) \epsilon_{it}(w)] \Omega_{ii}(v,w) dv dw
		\numberthis\label{equation - C_e tr W in proof of bilinear form lemma}
		  \\
		& =  \<a,b\> \sum_{i=1}^K\iint  C_\epsilon(v,w) \Omega_{ii}(v,w) dv dw.
	\end{align*}
	By Proposition \ref{proposition - matrix of op} we obtain
	\begin{align*}
		\Em[a'Wb] 
		=\sum_{i=1}^K \<C_\epsilon \Omega_{ii}\>
		& =   \<a, b\> \langle C_\epsilon \Omega \rangle.
	\end{align*}
	This gives the equality in \eqref{equation - lemma bilinear form, E[aWb]}.
	For second moments, we write 
	\begin{align*}
		\Em[a'Wb, c'Wd]
		 = 
		\sum_{i,j,k,l=1}^K 
		\iiiint
		A_{ijkl}(v,w, \alpha, \beta)
		\Omega_{ij}(v,w) \Omega_{kl}(\alpha, \beta)
		dv dw d \alpha d \beta,
	\end{align*}
	where we have defined 
	\begin{align*}
		A_{ijkl}(v,w, \alpha, \beta):& = 
		\Em[ \<a, \epsilon_{i\cdot}(v)\> \<b, \epsilon_{j\cdot}(w)\>
		\<c, \epsilon_{k\cdot}(\alpha)\>\<d, \epsilon_{l\cdot}(\beta)\>].
	\end{align*}
	We will first compute $A_{ijkl}$ and $\sum_{ijkl}\Omega_{ij}(v,w) \Omega_{kl}(\alpha, \beta)$ for all the possible partitions of index set $\{ i,j,k,l\}$. For ease of notation we will write
	\begin{align*}
		\tilde \Omega:=  \sum_i \Omega_{ii}(v,w) \Omega_{ii}(\alpha, \beta).
	\end{align*} 
	On the set $\{i=j\ne k=l\}$, we have
	\begin{align*}
		A_{iikk} (v,w, \alpha, \beta)
		& = 
		\Em[\<a, \epsilon_{i\cdot}(v)\> \<b, \epsilon_{i\cdot}(w)\>]
		\Em[\<c, \epsilon_{k\cdot}(\alpha)\> \<d, \epsilon_{k\cdot}(\beta)\>]
		\\
		& = \<a,b\> \<c,d\>C_\epsilon(v,w)C_\epsilon(\alpha,\beta),
	\end{align*}
	where the last line follows from similar computations as in \eqref{equation - C_e tr W in proof of bilinear form lemma}, and
	\begin{align*}
		\sum_{i=j\ne k=l}  \Omega_{ij}(v,w) \Omega_{kl}(\alpha, \beta)
		& = \sum_{i\ne k} \Omega_{ii}(v,w) \Omega_{kk}(\alpha, \beta)
	\\
		& =  \tr \Omega(v, w) \tr \Omega(\alpha, \beta)
		- \tilde \Omega.
	\end{align*}
	Similarly, on the set $\{i=k\ne j=l\}$, we have
	\begin{align*}
		A_{ijij} (v,w, \alpha, \beta)
		& = 
		\Em[ \<a, \epsilon_{i\cdot}(v)\> 
		\<c, \epsilon_{i\cdot}(\alpha)\>\>]
		\Em[  \<b, \epsilon_{j\cdot}(w)\>
		\<d, \epsilon_{j\cdot}(\beta)\>]
		\\
		& =
		\<a,c\> \<b,d\> C_\epsilon(v,\alpha)C_\epsilon(w,\beta),
	\end{align*}
	and
	\begin{align*}
		\sum_{i=k\ne j=l}  \Omega_{ij}(v,w) \Omega_{kl}(\alpha, \beta)
		& = \sum_{i\ne j} \Omega_{ij}(v,w) \Omega_{ij}(\alpha, \beta)
	\\
		& =  \tr \Omega(v, w)' \Omega(\alpha, \beta)
		- \tilde \Omega.
	\end{align*}
	On the set $\{i=l\ne j=k\}$, we have
	\begin{align*}
		A_{ijji} (v,w, \alpha, \beta)
		& = 
		\Em[ \<a, \epsilon_{i\cdot}(v)\> 
		\<d, \epsilon_{i\cdot}(\beta)\>\>]
		\Em[  \<b, \epsilon_{j\cdot}(w)\>
		\<c, \epsilon_{j\cdot}(\alpha)\>]
		\\
		& =
		\<a,d\> \<b,c\> C_\epsilon(v,\beta)C_\epsilon(w,\alpha),
	\end{align*}
		and
	\begin{align*}
		\sum_{i=l\ne j=k}  \Omega_{ij}(v,w) \Omega_{kl}(\alpha, \beta)
		& = \sum_{i\ne j} \Omega_{ij}(v,w) \Omega_{ji}(\alpha, \beta)
	\\
		& =  \tr \Omega(v, w) \Omega(\alpha, \beta)
		- \tilde \Omega.
	\end{align*}
	In cases where exactly three of the four indices coincide we have $A_{i,j,k,l\equiv0}$ since $\Em[\epsilon]=0$.  Finally on the diagonal $\{i=j=k=l\}$, we have
	\begin{align*}
		A_{iiii}(v,w, \alpha, \beta)
		=
		\Em[ \<a, \epsilon_{i\cdot}(v)\>\<b, \epsilon_{i\cdot}(w)\>\<c, \epsilon_{i\cdot}(\alpha)\>\<d, \epsilon_{i\cdot}(\beta)\> ]
	\end{align*}
	and
	\begin{align*}
		\sum_{i=j=k=l}  \Omega_{ij}(v,w) \Omega_{kl}(\alpha, \beta)
		& = \sum_{i} \Omega_{ii}(v,w) \Omega_{ii}(\alpha, \beta) = \tilde \Omega.
	\end{align*}
	Expanding the expression for $A_{iiii}$ and partitioning the index set in a similar way, we get
	\begin{align*}
		A_{iiii}&(v,w,\alpha, \beta)
		 = 
		\sum_{s,t, m,n}
		\Em[a_s b_t c_m d_n \epsilon_{is}(v) \epsilon_{it}(w)\epsilon_{im}(\alpha)	\epsilon_{in}(\beta)]
		\\
		& =
		\left(\sum_{s=t\ne m=n} + \sum_{s=m\ne t=n} + \sum_{s=n\ne t=m}
		+ \sum_{s=t=m=n}\right)
		\Em[a_s b_t c_m d_n \epsilon_{is}(v) \epsilon_{it}(w)\epsilon_{im}(\alpha)	\epsilon_{in}(\beta)],
	\end{align*}
	since $\Em[\epsilon_{is}\epsilon_{it}\epsilon_{im}\epsilon_{in}]=0$ on sets of exactly three coinciding indices. Therefore
	\begin{align*}
		A_{iiii}(v,w, \alpha, \beta)
		= 
		\<a,b\>&\<c,d\>C_\epsilon(v, w)C_\epsilon (\alpha, \beta)
		  \\
		& 
		+ \<a, c\> \<b,d\> C_\epsilon(v, \alpha)C_\epsilon (w, \beta)
		  \\
		& +
		\<a,d\>\<b,c\>C_\epsilon(v, \beta)C_\epsilon (w, \alpha)
		  \\
		& + \sum_t \kappa_{it}(v,w,\alpha, \beta)  a_t b_t c_t d_t
	\end{align*}
	where we have defined
	\begin{align*}
		\kappa_{it} (v,w, \alpha, \beta):=
		\Em[\epsilon_{it}(v) \epsilon_{it}(w) \epsilon_{it}(\alpha) \epsilon_{it}(\beta)-3].
	\end{align*}
	Recall from \eqref{equation - lemma bilinear form, E[aWb]} that
	\begin{align*}
		\Em[a'Wb] = \<a,b\>\langle C_\epsilon \Omega \rangle,
		\quad
		\Em[c'Wd] =\<c,d\>\langle C_\epsilon \Omega \rangle.
	\end{align*}
	After collecting the above terms and simplifying, we note that 
	all terms containing the quantity $\tilde \Omega$ cancel out and we obtain
\begin{align*}
\Cov(a'Wb,&\ c'Wd) = 
\<a,c\>\<b,d\> \int_{\cI^4} C_{\epsilon} (v,\alpha)C_\epsilon(w,\beta) \tr \Omega(v,w)' \Omega(\alpha, \beta) dv dw  d \alpha d \beta
\\
		& 
		+ \<a,d\>\<b,c\> \int_{\cI^4} C_{\epsilon} (v,\beta)C_\epsilon(w,\alpha) \tr \Omega(v,w) \Omega(\alpha, \beta)
		dv dw  d \alpha d \beta
		  \\
		&
		+
		\sum_t a_t b_t c_t d_t
		\sum_i\int_{\cI^4}
		\kappa_{it}(v,w,\alpha, \beta) \Omega_{ii}(v, w) \Omega_{ii}(\alpha, \beta)dv dw  d \alpha d \beta.
	\end{align*}
	Recall that $C_\epsilon$ is self-adjoint. 
	By definition of $\Omega$ and Proposition \ref{proposition - matrix of op}, we have
	\begin{align*}
		\Omega_{ij}(u,v) = \Omega_{ji}^*(u,v) =  \Omega_{ji}(v,u).
		\numberthis\label{equation - Omega ij and adjoints}
	\end{align*}
	By linearity, the first term in the above expression simplifies to
	\begin{align*}
		& \int C_{\epsilon} (v,\alpha) C_\epsilon(w,\beta) \tr \Omega(v,w)' \Omega(\alpha, \beta) dv dw  d \alpha d \beta
		\\ 
        = & 
		\sum_{i,j} \int_{\cI^4} C_\epsilon(\beta,w) \Omega_{ij}(w,v)C_\epsilon(v,\alpha) \Omega_{ji}(\alpha,  \beta) 
		dv dw  d \alpha d \beta
		  \\
		= & \sum_{i,j}
		\<C_\epsilon \Omega_{ij} C_\epsilon \Omega_{ji}\>
	\end{align*}
	Using \eqref{equation - Omega ij and adjoints} and the self-adjointness of $C_\epsilon$ again, we obtain
	\begin{align*}
		\sum_{i,j}
		\<C_\epsilon \Omega_{ij} C_\epsilon \Omega_{ji}\>
		& =\sum_{i,j}
		\<C_\epsilon^{1/2} \Omega_{ij} C_\epsilon^{1/2} C_\epsilon^{1/2} \Omega_{ij}^* C_\epsilon^{1/2}\>
		  \\
		& =
		\sum_{i,j}
		\norm{C_\epsilon^{1/2} \Omega_{ij} C_\epsilon^{1/2}}_F^2
		=
		\norm{C_\epsilon^{1/2} \Omega C_\epsilon^{1/2}}_F^2.
	\end{align*}
	Similarly we have
	\begin{align*}
		\int_{\cI^4} C_{\epsilon} (v,\beta)C_\epsilon(w,\alpha) \tr \Omega(v,w) \Omega(\alpha, \beta)
		dv dw  d \alpha d \beta
		= \norm{C_\epsilon^{1/2} \Omega C_\epsilon^{1/2}}_F^2. 
	\end{align*}
	Therefore we have
\begin{align*}
& \Cov(a'Wb, c'Wd)
	 -  (\<a,c\>\<b,d\> + \<a,d\>\<b,c\>) \norm{C_\epsilon^{1/2} \Omega C_\epsilon^{1/2}}_F^2
\\
= & \sum_t a_t b_t c_t d_t \sum_i\int_{\cI^4}\kappa_{it}(v,w,\alpha, \beta) \Omega_{ii}(v, w) \Omega_{ii}(\alpha, \beta)dv dw  d \alpha d \beta
\end{align*}
and it remains to bound the last term. By Assumption \ref{assumptions}.\ref{assumptions - epsilon 4th moemnt}, we have
\begin{align*}
&\left|\sum_i\int_{\cI^4}\kappa_{it}(v,w,\alpha, \beta) \Omega_{ii}(v, w) \Omega_{ii}(\alpha, \beta)dv dw  d \alpha d \beta\right|
\\
& \qquad\qquad \le \sum_i  \int_{\cI^4} C_\epsilon(\beta,v) \Omega_{ii}(v,w)
		C_{\epsilon}(w, \alpha) \Omega_{ii}(\alpha, \beta) dv dw d \alpha d \beta
\\
& \qquad\qquad
=\sum_i \<C_\epsilon \Omega_{ii} C_\epsilon \Omega_{ii}\>.
\end{align*}
By Proposition \ref{proposition - matrix of op}(\ref{proposition - matrix of op - trace}), we have
\begin{align*}
\sum_i \<C_\epsilon \Omega_{ii} C_\epsilon \Omega_{ii}\>
	& = 
	\sum_i \norm{C_\epsilon^{1/2} \Omega_{ii} C_\epsilon^{1/2}}_F^2
	\le 
	\sum_{i,j} \norm{C_\epsilon^{1/2} \Omega_{ij} C_\epsilon^{1/2}}_F^2
	= \norm{C_\epsilon^{1/2} \Omega C_\epsilon^{1/2}}_F^2,
\end{align*}
where the last equality follows from Proposition~\ref{proposition - matrix of op}(\ref{proposition - matrix of op - HS}). 
\end{proof}
\end{lemma}

The following lemma gives the spectral decomposition of the matrix $M \Theta'$. 
\begin{lemma}
	\label{lemma - spectral decomp of MTheta}
	The matrix $M\Theta'$ admits a singular value decomposition
	\begin{align*}
		M \Theta' = \sum_{t=1}^T \sigma_t w_t v_t', 
	\end{align*}
	where $\sigma_t = (2\sin(\frac{t\pi}{2T} ))^{-1}$ and 
	\begin{align*}
	 	w_{tn}
	 	=
	 	-\sqrt{\frac{2}{T}} \cos\left( \frac{(n-1/2)\pi t}{T}\right),
	 	\quad
	 	v_{tn}
	 	=
	 	\sqrt{\frac{2}{T}} \sin\left( \frac{(n-1)\pi t}{T}\right)
	\end{align*} 
	for $t<T$, and $\sigma_T =0$, $w_{Tn} = 1/\sqrt{T}$ and $v_{Tn} = 1_{\{n=1\}}$. 
	Furthermore, we have $\sigma_n\sim T$ for any fixed $n$ as $T\to\infty$. 
	\begin{proof}
		The spectral decomposition follows from Lemma 5 of \cite{OnatskiWang2021}. 
		For the last estimate, note that  $\sin(\frac{n\pi}{2T}) \approx  \frac{n\pi}{2T}$ as $T\to\infty$, so $\sigma_n \sim T$. 
	\end{proof}
\end{lemma}

Using Lemma \ref{lemma - moments of a'Wb} and Lemma \ref{lemma - spectral decomp of MTheta}, we may derive the following concentration bounds for certain quadratics forms of $W$.
\begin{lemma}\label{lemma - concentration of a'Wb}
	Let $(\sigma_i)_i, (v_i)_i$ be as defined in Lemma \ref{lemma - spectral decomp of MTheta}. 
	Under Assumptions \ref{assumptions}-\ref{assumptions - zeta}, for any $i\le j\le T$, we have the following estimate on the quadratic form of $W$:
	\begin{align*}
		v_i' W v_j = \delta_{ij}\langle C_\epsilon \Omega \rangle + \langle C_\epsilon \Omega \rangle o_p(1).
	\end{align*}
	Furthermore, we have
	\begin{align*}
		\sum_{k=i}^j  \sigma_k^2 v_k' W v_k
		& = \langle C_\epsilon \Omega \rangle 
		\sum_{k=i}^j \sigma_k^2 + \langle C_\epsilon \Omega \rangle o_p(T^2). 
	\end{align*}
	\begin{proof}
		Since $\<v_i, v_j\> = \delta_{ij}$, by
		Lemma \ref{lemma - moments of a'Wb}
		we have $\Em[v_i' W v_j] = \delta_{ij} \langle C_\epsilon \Omega \rangle$. 
		Furthermore, 
		\begin{align*}
			\Var(v_i'W v_j) 
			& = 
			(1+ \delta_{ij})\norm{C^{1/2}_\epsilon \Omega C_\epsilon^{1/2}}_F^2
			+ 
			O(2 	\kappa_4^* \sum_t v_{it}^2 v_{jt}^2 \norm{C^{1/2}_\epsilon \Omega C_\epsilon^{1/2}}_F^2)
			  \\
			& = O(\norm{C^{1/2}_\epsilon \Omega C_\epsilon^{1/2}}_F^2),
		\end{align*}
		where the last estimate holds since $\sum_t v_{it}^2v_{jt}^2 \le \sum_{s,t} v_{it}^2 v_{js}^2 =1$. 
		By Chebyshev's inequality
		\begin{align*}
			v_i' W v_j
			=
			\delta_{ij} \langle C_\epsilon \Omega \rangle
			+ O_p ( \norm{C_\epsilon^{1/2} \Omega C_\epsilon^{1/2}}_F) 
		\end{align*}
		and the first claim follows from  Assumption \ref{assumptions - Omega rank}.
		For the second estimate, recall from Lemma \ref{lemma - spectral decomp of MTheta} that $\sigma_k^{-1} \sim \sin (k/T) $ as $T\to\infty$. Clearly this implies $\sigma_k\sim T$. Since $i,j$ are fixed, this implies $\sum_{k=i}^j \sigma_k^2 = O(T^2)$ and the second claim follows. 
	\end{proof}
\end{lemma}

The following algebraic identity from Lemma 3.3 of \cite{YinEsserXin2014} gives a geomatric interpretation on the ratio between $\ell^1$ and $\ell^2$ norms of a vector. 
\begin{lemma}[Lemma 3.3 of \cite{YinEsserXin2014}]\label{lemma - ratio l1/l2}
	Let $x \in\Rm^n $. Then
	\begin{align*}
		\frac{{2\alpha(x)^{1/2}}}{1+ \alpha(x)} \norm{x}_0^{1/2}
		\le
		\frac{\norm{x}_1}{\norm{x}_2} \le  \norm{x}_0^{1/2}
	\end{align*}
	where $\norm{x}_0:=|\{i\le n, x_i\ne0 \}|$
	and $\alpha(x) $ is given by
	\begin{align*}
		\alpha(x):= \frac{\inf_{i\le n, x_i\ne 0} x_i}{\sup_{i\le n, x_i\ne0} x_i}.
	\end{align*}
\end{lemma}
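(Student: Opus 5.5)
The plan is to reduce both inequalities to the support of $x$ and then use two elementary estimates: Cauchy--Schwarz for the upper bound and a pointwise quadratic inequality for the lower bound. Write $S:=\{i\le n: x_i\ne 0\}$ and $s:=\norm{x}_0=|S|$. We may assume $s\ge1$, since for $x=0$ there is nothing to prove. Throughout, $\alpha(x)$ is read with absolute values, i.e.\ with $m:=\min_{i\in S}|x_i|$ and $M:=\max_{i\in S}|x_i|$ we set $\alpha=m/M\in(0,1]$. For nonnegative vectors, such as the eigenvalue sequences of $B_n$ to which the lemma is applied in Theorem~\ref{theorem - upper lower bounds}, this agrees with the definition in the statement. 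Both norms only involve the coordinates in $S$, so all sums below run over $S$.

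For the upper bound, apply Cauchy--Schwarz to the vectors $(|x_i|)_{i\in S}$ and $(1)_{i\in S}$:
\begin{align*}
\norm{x}_1=\sum_{i\in S}|x_i|\cdot 1\le s^{1/2}\Big(\sum_{i\in S}x_i^2\Big)^{1/2}=\norm{x}_0^{1/2}\norm{x}_2 .
\end{align*}
Dividing by $\norm{x}_2$ gives the right-hand inequality.

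For the lower bound, start from the fact that every $i\in S$ satisfies $m\le|x_i|\le M$. Hence $(|x_i|-m)(M-|x_i|)\ge0$, which rearranges to
\begin{align*}
x_i^2\le (m+M)|x_i|-mM .
\end{align*}
Summing over $i\in S$ gives
\begin{align*}
\norm{x}_2^2+s\,mM\le (m+M)\norm{x}_1 .
\end{align*}
By AM--GM, the left-hand side is at least $2\sqrt{s\,mM}\,\norm{x}_2$. Therefore
\begin{align*}
\frac{\norm{x}_1}{\norm{x}_2}\ge \frac{2\sqrt{mM}}{m+M}\,s^{1/2}=\frac{2\alpha^{1/2}}{1+\alpha}\norm{x}_0^{1/2},
\end{align*}
where the last equality follows by dividing numerator and denominator by $M$.

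The only real obstacle is finding the right intermediate inequality for the lower bound. A naive attempt compares $\norm{x}_1\ge s\,m$ with $\norm{x}_2\le s^{1/2}M$, but this only yields the weaker constant $\alpha$ in place of $2\alpha^{1/2}/(1+\alpha)$. The quadratic trick $(|x_i|-m)(M-|x_i|)\ge0$, combined with AM--GM, recovers the sharp constant. Equality holds when half of the support entries equal $m$ and the other half equal $M$.
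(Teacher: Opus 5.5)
The paper does not prove this lemma; it simply cites \cite{YinEsserXin2014}. Your argument is therefore a self-contained replacement for a citation, and it is correct.

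Cauchy--Schwarz on the support gives the upper bound. For the lower bound, you sum $(|x_i|-m)(M-|x_i|)\ge 0$ over the support and then use $\norm{x}_2^2+s\,mM\ge 2\sqrt{s\,mM}\,\norm{x}_2$. This is the standard Kantorovich/P\'olya--Szeg\H{o} route to the sharp constant.

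That sharpness is what the paper actually needs. In Theorem~\ref{theorem - upper lower bounds}, the factor $\alpha(B_n)^{1/2}$ comes from $2\alpha^{1/2}/(1+\alpha)\ge\alpha^{1/2}$, whereas your ``naive'' estimate would only give $\alpha(B_n)$.

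Reading $\alpha(x)$ with absolute values is also more than a convenience. With the signed definition in the statement, any $x$ with entries of both signs has $\alpha(x)<0$, so $\alpha(x)^{1/2}$ is undefined. For a vector of constant negative sign, the signed ratio is the reciprocal of yours. That is harmless, because $2\alpha^{1/2}/(1+\alpha)$ is invariant under $\alpha\mapsto 1/\alpha$. For the nonnegative eigenvalue vectors of $B_n=A_nA_n'$, where the lemma is used, the two readings agree.

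One side remark is wrong, though it does not affect the proof: equality does not occur when half of the support sits at $m$ and half at $M$. Equality forces every $|x_i|\in\{m,M\}$ and $\norm{x}_2^2=s\,mM$. If $k$ support entries have modulus $M$, this reads $kM^2+(s-k)m^2=s\,mM$, i.e.\ $k(m+M)=s\,m$. So a fraction $m/(m+M)$ of the support must sit at $M$, and this equals one half only when $m=M$. For instance, $x=(4,1,1,1,1)$ attains the bound $0.8\sqrt5$, whereas $x=(4,1)$ gives $5/\sqrt{17}\approx 1.21>0.8\sqrt2\approx 1.13$.
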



Finally,  Lemma \ref{lemma - same spectrum} is a linear algebraic result used to compute the spectrums of certain operators. 
\begin{lemma}\label{lemma - same spectrum}
	Let $\cH$ be a Hilbert space with inner product $\<\cdot, \cdot\>$ and $p,q, T$ be positive integers. 
	Suppose $f_1, \ldots, f_T$ and $g_1, \ldots, g_T$ are elements of $\cH^p$ with components $f_t = (f_{1t}, \ldots, f_{pt})$ and $g_t = (g_{1t}, \ldots, g_{pt})$ respectively. 
	Let $A=(A_{st})$ be the  $T\times T$ matrix given by
	\begin{align*}
		A_{st} = \sum_{i=1}^p \<f_{is}, g_{it}\>, \quad s,t=1,\ldots, T
	\end{align*}
	and  $\mathcal A$ be the compact operator on $\cH^p$ given by
	\begin{align*}
		\mathcal A  = \sum_{t=1}^T f_t\otimes g_t. 
	\end{align*}
	Then $A$ and $\mathcal A$ share the same non-zero singular values. Furthermore, we have
	\begin{align*}
		\norm{A}^2 \le \norm{A_f} \norm{A_g},
	\end{align*}
	where $A_f$ and $A_g$ are the matrices whose entries are given by 
	\begin{align*}
		(A_f)_{st} = \sum_{i=1}^p \<f_{is}, f_{it}\>, 
		\quad
		(A_g)_{st} = \sum_{i=1}^p \<g_{is}, g_{it}\>.
	\end{align*}
\begin{proof}
	Let $F: \Rm^T \to \cH^p$ be the linear operator given by
	\begin{align*}
		Fu =
		\sum_{t=1}^T u_t f_t,\quad u = (u_1, \ldots, u_T)'\in \Rm^T.
	\end{align*}
	For $h = (h_1, \ldots, h_p)\in\cH^p$, since
	\begin{align*}
		\<h, Fu\>
		=
		\sum_{i=1}^p \sum_{t=1}^T\<h_i, u_t f_{it}\>
		=
		\sum_{t=1}^T  u_t \<h, f_t\>,
	\end{align*}
	the adjoint of $F$ is given by $F^*:\cH^p\to \Rm^T$
	\begin{align*}
		F^* h = (\<h, f_1\>, \ldots, \<h, f_T\>)',\quad h\in\cH^p.
	\end{align*}
	Similarly, define
	$G:\cH^p \to \Rm^T$ by
	\begin{align*}
		G u = \sum_{t=1}^T u_t g_t, \quad u \in\Rm^T
	\end{align*}
	whose adjoint is given by
	\begin{align*}
		G^* h = (\<h, g_1\>, \ldots, \<h, g_T\>)',\quad h\in\cH^p.
	\end{align*}
	Let $(e_t)_{t=1,\ldots, T}$ be the standard basis of $\Rm^T$, then we have
	\begin{align*}
		\<e_s, F^* G e_t\>
		=
		\<  F e_s,  G e_t   \> =  \<f_s, g_t\> = A_{st}
	\end{align*}
	the matrix $A$ can be written as $A = F^{*} G$. Similarly, we have $\mathcal A = F G^*$ since
	\begin{align*}
		\<h_1, FG^* h_2\>
		=
		\<F^* h_1, G^* h_2\>
		=
		\sum_{t=1}^T  \<h_1, f_t\> \<h_2, g_t\>  = \<h_1, \mathcal A h_2\>
	\end{align*}
	for $h_1, h_2\in\cH^p$. It remains to observe that $A = F^*G$ shares the same non-zero singular values as $A^* = G^*F$ which shares the same non-zero singular values as $\mathcal A = F G^*$. This gives the first claim.
	The second claim follows from the inequality
	\begin{align*}
		\norm{A}^2 \le \norm{F}^2 \norm{G}^2
		= \norm{FF^*} \norm{GG^*}
	\end{align*}
	and the fact that $A_f = FF^*$ and $A_g = GG^*$. 
\end{proof}
\end{lemma}

\subsection{Proof of the main results}
\begin{proof}[of Theorem \ref{theorem - main}]
	First we show that
	show that under Assumption \ref{assumptions}-\ref{assumptions - zeta}, Theorem \ref{theorem - main} holds for the matrix $\hat S$ whenever it holds for the covariance matrix of the non-stationary part of the model $\td S$, i.e. the perturbation  of the stationary error term $\zeta$ is asymptotically negligible. 

We will show below that the spectral gaps of $\td S$ are asymptotically of size
\begin{align*}
|\lambda_{k}(\td S) - \lambda_{k+1}(\td S)|\sim \norm{\td S}\sim T^2 p^{-1}\langle C_\epsilon \Omega\rangle.\label{equation - gap of tilde S}
\numberthis
\end{align*}
By standard perturbation theory \citep{Kato2013}, to prove  that the perturbation of $\zeta$ is asymptotically negligible, it suffices to show that $\norm{\hat S - \td S} = o_p(T^2 p^{-1}\langle C_\epsilon \Omega\rangle)$. This is what we show now. Since $X(u) := (\Psi F)(u)+ \zeta(u)$, by the triangle inequality we have
\begin{align*}
\norm{    \hat S -  \td S   } \lesssim
\normm{     \frac{1}{p}\int_\cI M\zeta(u)'\zeta(u)M du   } + 
\normm{     \frac{1}{p}\int_\cI M(\Psi F)(u)'\zeta(u)M du   }.
\numberthis\label{equation - perturbation proof S - S}
\end{align*}
where $M$ denotes the orthogonal projection onto the orthogonal complement of the $T$-dimensional vector of ones, i.e. $M = I_T - T^{-1} 1_T 1_T'$. We will use the fact that $\norm{M} = 1$ throughout the rest of the proof without mention. 

	For the first term in \eqref{equation - perturbation proof S - S}, by Lemma \ref{lemma - same spectrum}, we have
	\begin{align*}
		\normm{\frac{1}{p}\int_\cI M\zeta(u)'\zeta(u)M du   }
		\le
		\frac{1}{p}\normm{\sum_{t=1}^T \zeta_t \otimes \zeta_t}
		\le \frac{1}{p}\sum_{t=1}^T \norm{ \zeta_t}^2_{\cH^p}. 
	\end{align*}
	Since $\Em[\norm{\zeta_t}^2]= \<C_\zeta\>$, by Markov's inequality we have
	\begin{align*}
		\normm{     \frac{1}{p}\int_\cI M\zeta(u)'\zeta(u)M du   }
		=
		O_p(Tp^{-1} \< C_\zeta\>)
		=o_p(T^2p^{-1} \<C_\epsilon \Omega\>), 
		\numberthis\label{equation - order of zeta^2}
	\end{align*}
	where the last estimate follows from Assumption \ref{assumptions - zeta}.
	Similarly, by the second claim of Lemma \ref{lemma - same spectrum}, the second term in \eqref{equation - perturbation proof S - S} is bounded by
	\begin{align*}
\normm{     \frac{1}{p}\int_\cI M(\Psi F)(u)'\zeta(u)M du   }
		& \le
		\normm{ \frac{1}{p}\int_\cI M(\Psi F)(u)'(\Psi F)(u)M du   }^{1/2}
		\normm{     \frac{1}{p}\int_\cI M \zeta(u)'\zeta(u) Mdu   }^{1/2}
		  \\
		& =
		\norm{ \td S  }^{1/2}
		\normm{     \frac{1}{p}\int_\cI M\zeta(u)'\zeta(u) Mdu   }^{1/2}
		  \\
		& =o_p( T^2 p^{-1} \<C_\epsilon \Omega\>  ),
	\end{align*}
	where the last estimate follows from \eqref{equation - order of zeta^2} and the bound
	\begin{align*}
		\norm{\td S} = O_p(T^2 p^{-1} \< C_\epsilon \Omega\>)
		\numberthis\label{equation - order of tilde S}
	\end{align*}
	which will be shown below. 
	From this we conclude that $\norm{\td S - \hat S} = o_p(T^2 p^{-1}\<C_\epsilon \Omega\>)$
	which shows that the perturbation of $\zeta$ is asymptotically negligible. 

	It remains to show that Theorem \ref{theorem - main} holds for the covariance operator $\td S$, which also proves the two estimated  \eqref{equation - gap of tilde S} and \eqref{equation - order of tilde S} we used above. 
	This proof is adapted from the arguments in Section A.3.1 of \cite{OnatskiWang2021}. We sketch the key components of the argument adapted to the functional setting of our model.

	Let $\td \lambda_n, \td u_n, n=1,\ldots, T$ be the eigenvalues of $\td S$ arranged in non-ascending order and the corresponding eigenvectors. 
	We will focus on the largest eigenvalues $\tilde \lambda_1$ and the corresponding eigenvector $\tilde u_1$; the general case can be obtained by mathematical induction. We omit the details of the induction argument, since it can be found on \citet[][p.610-611]{OnatskiWang2021}.

	Recall from Lemma \ref{lemma - spectral decomp of MTheta} that the set $\{w_n, n=1, \ldots, T\}$ forms an orthonormal basis of $\Rm^T$ and $w_T$ is proportional to a vector of ones. Since $\tilde u_1$ is orthogonal to $w_T$ by construction, we have the representation
	\begin{align*}
		\tilde u_1 = \sum_{t=1}^{T-1} \alpha_t w_t
	\end{align*}
	for some set of weights $(\alpha_t)$. Note that since $(w_t)$ is an orthonormal set, we have
	\begin{align*}
		1 = \norm{\td u_1}^2 = \norm{\sum_{t=1}^{T-1} \alpha_t w_t}
		= \sum_{t=1}^{T-1} \alpha_t^2 \norm{w_t}^2 = \sum_{t=1}^{T-1} \alpha_t^2.
	\end{align*}
	By the definitions of $\td \lambda_1, \td u_1, \td S$ we have
	\begin{align*}
		\td \lambda_1 & =  \td u_1' \td S \td u_1
		=
		\sum_{s,t=1}^{T-1} \alpha_s \alpha_t w_s' \td S w_t. 
	\end{align*}
	Since $\td S = p^{-1}M \Theta' W \Theta M$ where $M \Theta' = \sum_t \sigma_t w_t v_t'$ by Lemma \ref{lemma - spectral decomp of MTheta}, we have
	\begin{align*}
		p\td \lambda_1
		=
		\sum_{s,t=1}^{T-1} \alpha_s  \alpha_t \td \sigma_s\td \sigma_t  v_s' W v_t.
	\end{align*}
The quadratic form $v_s'W v_t$ can be estimated using Lemma~\ref{lemma - concentration of a'Wb}, however, the sum of $T-1$ such terms has to be dealt with additional care since $T\to\infty$. Using a truncation argument similar to the one found in \citet[][p.607-p.609]{OnatskiWang2021}, we may show that 
\begin{align*}
p \hat \lambda_1 \le \langle C_\epsilon \Omega \rangle \left(\sum_{t=1}^{T-1} \alpha_t^2 \td \sigma_t^2 + o_p(T^2)\right).
\end{align*}
Since $\td \sigma_1$ is the leading eigenvalue, and $\sum_{t=1}^{T-1} \alpha_t^2 =1$, we have
\begin{align*}
\sum_{t=1}^{T-1} \alpha_t^2 \td \sigma_t^2 = \alpha_1^2 \td \sigma_1^2+
\sum_{t=2}^{T-1} \alpha_t^2 \td \sigma_t^2 \le \alpha_1^2 \td \sigma_1^2+
(1- \alpha_1^2) \td \sigma_2^2, 
\end{align*}
which gives
\begin{align*}
p \hat \lambda_1 \le \alpha_1^2 \td \sigma_1^2 \langle C_\epsilon \Omega \rangle + (1- \alpha_1^2) \td \sigma_2^2\langle C_\epsilon \Omega \rangle 
+ o_p(T^2) \langle C_\epsilon \Omega \rangle . 
\end{align*}
On the other hand, since $\td \lambda_1$ is the leading eigenvalue, we have
\begin{align*}
p \td \lambda_1 \ge p w_1' \td \Sigma w_1 = \sigma_1 v_1' W v_1 =
\sigma_1^2 \langle C_\epsilon \Omega \rangle + o_p(T^2) \langle C_\epsilon \Omega \rangle. 
\end{align*}
Combining the upper and lower bounds and simplifying, we obtain
\begin{align*}
(\sigma_1^2 - \sigma_2^2)(1- \alpha_1^2) \langle C_\epsilon \Omega \rangle
\le o_p(T^2) \langle C_\epsilon \Omega \rangle.
\end{align*}
Using the fact that $\sigma_1^2- \sigma_2^2\sim T^2$, we obtain $1- \alpha_1^2 = o_p(1)$, which implies $\alpha_1^2 = (\td u_1'w_1)^2 \to 1$ in probability as $T\to\infty$. This gives the first claim.

Using the upper and lower bounds again, we may obtain
\begin{align*}
|p\td \lambda_1 - \sigma^2 \langle C_\epsilon \Omega \rangle|
\le |1- \alpha_1^2|(\sigma_1^2+ \sigma_2^2) \langle C_\epsilon\Omega \rangle + o_p(T^2)\langle C_\epsilon\Omega \rangle = o_p(T^2)\langle C_\epsilon\Omega \rangle,
\end{align*}
since $\sigma_1^2\sim T^2/\pi^2$ and $1-\alpha_1^2 = o_p(1)$. This gives
\begin{align*}
\tilde \lambda_1 = \frac{T^2 }{\pi^2 p} \langle C_\epsilon \Omega \rangle(1+ o_p(1)),  
\end{align*}
which is the asymptotic limit of the largest empirical eigenvalue. This also implies the bounds~\eqref{equation - gap of tilde S} and~\eqref{equation - order of tilde S}. Finally, the last claim follows from the same arguments as p.609-610 of \cite{OnatskiWang2021}, and we omit the details. 
\end{proof}

\begin{proof}[of Theorem~\ref{theorem - upper lower bounds}]
	Using \eqref{eq- def of Omega} and \eqref{equation - Psi_ik special case in theorem 2}, the operator $\Omega$ can be written as
	\begin{align*}
		\Omega_{kl}(v,w) & = \sum_{i=1}^p \int_\cI  \Psi_{ik}(u,v)\Psi_{il}(u,w) du
		  \\
		& =
		\sum_{i=1}^p 
		\sum_{n=1}^\infty 
		a_{nik} a_{nil}
		\phi_n(v)
		\phi_n(w)
		  \\
		& = \sum_{n=1}^\infty (A_n' A_n)_{kl} \phi_n(v) \phi_n(w).
	\end{align*}
	Since $ C_\epsilon^{1/2} = \sum_n c_n^{1/2} \phi_n\otimes \phi_n$, we have
	\begin{align*}
		(C_\epsilon^{1/2} \Omega_{kl} C_\epsilon^{1/2})(u,v)
		& =  \int_{\cI} C_\epsilon^{1/2}(u,\alpha) \Omega_{kl}(\alpha,\beta) C_\epsilon^{1/2}(\beta,v) d \alpha d \beta
		  \\
		&
		=\sum_{n=1}^\infty c_n  (A_n' A_n)_{kl} \phi_n(u) \phi_n(v).
	\end{align*}
	By Proposition \ref{proposition - matrix of op}, the trace of $\cC$ is given by
	\begin{align*}
		\langle \cC \rangle
		& = \sum_{k=1}^K \<  C_\epsilon \Omega_{kk} \>
		=
		\sum_{k=1}^K \sum_{n=1}^\infty c_n (A_n' A_n)_{kk} 
		=
		\sum_{n=1}^\infty c_n \<B_n\>. 
	\end{align*}
	Similarly, the Hilbert-Schmidt norm of $\cC$ is given by
	\begin{align*}
		\norm{\cC}_2^2
		=
		\sum_{kl} \norm{C_\epsilon^{1/2} \Omega_{kl} C_\epsilon^{1/2} }^2
		=
		\sum_{n=1}^\infty \sum_{kl} c_n^2 (A_n'A_n)_{kl}^2
		=
		\sum_{n=1}^\infty  c_n^2 \norm{B_n}^2_2.
	\end{align*}
	Since $1= \<C_\epsilon\> = \sum_n c_n$ and $\norm{C_\epsilon}_2^2 = \sum_n c_n^2$, we have the following upperbound
	\begin{align*}
		\frac{\langle \cC \rangle^2 }{\norm{\cC}_2^2}
		& =  \limsup_{q\to\infty}
		\frac{(\sum_{n=1}^q c_n \<B_n\>)^2}{\sum_{n=1}^q c_n^2 \norm{B_n}_2^2}
		  \\
		& \le 
		\inf_{Q\in\mathbb N} \sup_{q>Q}
		\frac{\max_{n\le q} \<B_n\>^2 }{\min_{n\le q} \norm{B_n}_2^2}
		\frac{(\sum_{n=1}^q c_n) ^2}{  {\sum_{n=1}^q c_n^2} }
		  \\
		& \le
		\frac{1}{\norm{C_\epsilon}_2^2}
		\inf_{Q\in\mathbb N} \sup_{q>Q}
		\frac{\max_{n\le q} \<B_n\>^2 }{\min_{n\le q} \norm{B_n}_2^2}.
	\end{align*}
	Let $k_q$ be the index such that the maxinum  in the numerator is attained, i.e.
	\begin{align*}
		k_{q}:= \mathrm{arg\ max}_{n\le q} \<B_n\>.
	\end{align*}
	Since $  \norm{B_n}_2 \gtrsim \norm{B_{k_q}}_2  $ uniformly in $n$ and $q$ by assumption, we have
	\begin{align*}
		\frac{\max_{n\le q} \<B_n\> }{\min_{n\le q} \norm{B_n}_2}
		\lesssim
		\frac{ \<B_{k_q}\> }{  \norm{B_{k_q}}_2  } \le 
		\max_{n\le q}
		\frac{ \<B_n\> }{\norm{B_n}_2},
	\end{align*}
	which gives the first upper bound
	\begin{align*}
		\frac{\langle \cC \rangle }{\norm{\cC}_2}
		\lesssim
		\frac{1}{\norm{C_\epsilon}_2}
		\inf_{Q\in\mathbb N} \sup_{q>Q}
		\max_{n\le q} \frac{ \<B_n\> }{\norm{B_n}_2}
		\le 
		\frac{1}{\norm{C_\epsilon}_2}
		\sup_{n}\frac{ \<B_n\> }{\norm{B_n}_2}.
	\end{align*}
	By the Cauchy-Schwarz inequality we have
	$\<B_n\> \le \norm{B_n}_0^{1/2} \norm{B_n}_2 $ and the second upper bound follows from the first upper bound.

	Similar to the above computations, using $\norm{B_n}_2\sim \norm{B_m}_2$ we obtain a lower bound of the form
	\begin{align*}
		\frac{\langle \cC \rangle }{\norm{\cC}_2}
		& \ge
		\frac{1}{\norm{C_\epsilon}_2}
		\sup_{Q\in\mathbb N} \inf_{q>Q}
		\frac{\min_{n\le q} \<B_n\> }{\max_{n\le q} \norm{B_n}_2}
		\gtrsim
		\frac{1}{\norm{C_\epsilon}_2}
		\sup_{Q\in\mathbb N} \inf_{q>Q}
		\min_{n\le q}
		\frac{\<B_n\> }{\norm{B_n}_2}.
	\end{align*}
	Since $\<B_n\>\ge \norm{B_n}_2$, we obtain the first half of the first lower bound
	\begin{align*}
		\frac{\langle \cC \rangle }{\norm{\cC}_2}
		\gtrsim \frac{1}{\norm{C_\epsilon}_2}.
	\end{align*}
	For the second  half, 
	using the non-negativity of the summand, we have
	\begin{align*}
		\frac{(\sum_{n=1}^q c_n \<B_n\>)^2}{\sum_{n=1}^q c_n^2 \norm{B_n}_2^2}
		&\ge
		\frac{\sum_{n=1}^q c_n^2 \<B_n\>^2}{\sum_{n=1}^q c_n^2 \norm{B_n}_2^2}
		\\
		& =
		\frac{c_m^2 \<B_m\>^2 + \sum_{n\ne m}^q c_n^2 \<B_n\>^2}{\sum_{n=1}^q c_n^2 \norm{B_n}_2^2},
	\end{align*}
	where $m\le q$ is arbitrarily chosen. 
	Since $\norm{B_n}_2\sim \norm{B_m}_2$ uniformly for all $n$, we have
	\begin{align*}
		\frac{(\sum_{n=1}^q c_n \<B_n\>)^2}{\sum_{n=1}^q c_n^2 \norm{B_n}_2^2}
		\gtrsim 
		\frac{c_m^2 \<B_m\>^2 + \sum_{n\ne m}^q c_n^2 \<B_n\>^2}{\norm{C_\epsilon}^2_2 \norm{B_m}_2^2}
		\ge 
		\frac{1}{\norm{C_\epsilon}^2_2 } c_m^2\frac{\<B_m\>^2}{\norm{B_m}_2^2}.
	\end{align*}
	Taking the limit in $q$ yields the first lower bound. 
	Finally, using Lemma \ref{lemma - ratio l1/l2}, we get
	\begin{align*}
		\frac{\<B_m\>}{\norm{B_m}_2}
		\gtrsim \norm{B_m}_0^{1/2} \alpha(B_m)^{1/2}
	\end{align*} 
	which completes the proof. 
	\end{proof}

\section{A conjecture on the asymptotics of sample eigenvectors}\label{section - conjecture}

Lastly, we end the paper with some interesting observations and a conjecture, which forms the basis of some future research. We observe that for the case $K=2$ in Setting V (Figure \ref{figure - very_local iid}), the fifth sample eigenvector seemingly behaves in a significantly different way to the first four. More specifically, the first four sample eigenvectors vagues resembles the spurious limit and are relatively smooth,  while the fifth eigenvector resembles a white noise process which has rough sample paths. 

The model in this setting is generated by two basis functions with equal weight, and along the direction of each basis function, the non-stationary factor is loaded by a matrix $A_n$ of rank two. Heuristically, the non-stataionary component of the model spans a subspace of dimension four, while the orthogonal complement of this subspace is spanned by the stationary components of the model. This is similar to the concept of the attractor and cointegrating space that has recently garnered much attention in the functional time series literature \citep[see][]{ChangKimPark2016, BeareSeoSeo2017, LiRobinsonShang2023}. In this line of work, the dimension of the non-stationary space is determined by model parameters, and the asymptotic behaviour of the eigenstructure differs on the stationary and non-stationary subspaces. 

It is therefore natural to conjecture that for models with $C_\epsilon$ and $A_n$ being of finite rank, only the first $\norm{C_\epsilon}_0\norm{A_n}_0$ eigenvectors will tend to the spurious limit, while the rest behave like white noise. This is because the limits in Theorem \ref{theorem - main} arise as eigenfunctions of Wiener processes, which are the scaling limits of the non-stationary part of the model.
This is confirmed empirically by simulations under a wide range of different settings, we include a simple example to illustrate what typically occurs. 
\begin{figure}[!htb]
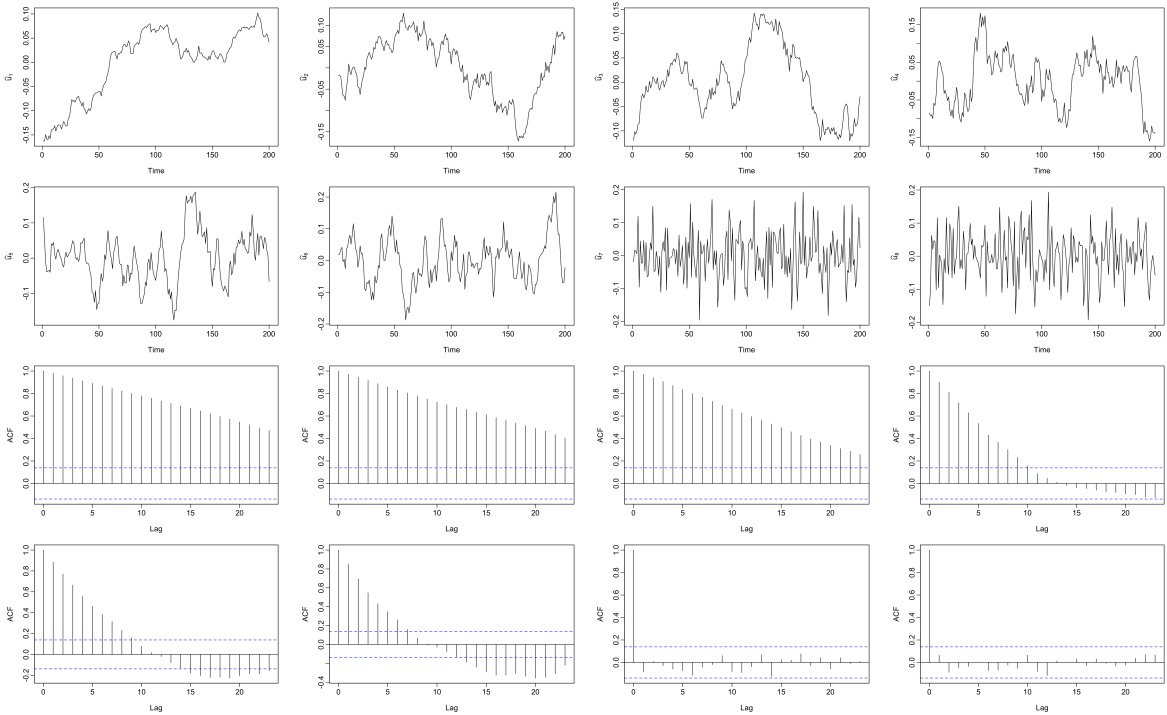

    \centering
    \subf{test_11.png}{.23}
    \subf{test_12.png}{.23}
    \subf{test_13.png}{.23}
    \subf{test_14.png}{.23}
    \subf{test_15.png}{.23}
    \subf{test_16.png}{.23}
    \subf{test_17.png}{.23}
    \subf{test_18.png}{.23}
    \subf{test_21.png}{.23}
    \subf{test_22.png}{.23}
    \subf{test_23.png}{.23}
    \subf{test_24.png}{.23}
    \subf{test_25.png}{.23}
    \subf{test_26.png}{.23}
    \subf{test_27.png}{.23}
    \subf{test_28.png}{.23}
    \caption{An eample where $C_\epsilon$ is of rank 2 and loading matrices are of rank 3}\label{figure - last one}
\end{figure}
The top eight subfigures of Figure~\ref{figure - last one} plot the sample eigenvectors of a model with $K=10$,  $C_\epsilon$ following Setting (C) and $A_n = G_{p\times 3} \times G_{3\times K}$, where $G_{m,n}$ denotes a $m\times n$ matrix of i.i.d. standard Gaussians. Clearly the first six eigenvectors behave differently than the last two, which is in line with our conjecture since here the ``effective rank'' of the model is equal to $\norm{C_\epsilon}_0\norm{A_n}_0=6$. The bottom eight figures plot the sample ACF of each of the sample eigenvectors as a time series of length 200. Clearly the first six eigenvectors exhibit persistent dependence which is reflected in their smoother sample paths, while the last two do not. 

A rigorous just justification of these observations and conjectures will serve as a valuable tool to identify non-stationary factors in a functional time series model, as well as distinguish spurious behaviour from genuine factors in the model. 
A much more intricate analysis of the eigenstructure of the model is needed to derive such results, and we will pursue this direction in a standalone work in the future. 

\bibliographystyle{plainnat}
\bibliography{ref2_cleaned}

\end{document}